\newcommand{\BEAS}{\begin{eqnarray*}}
\newcommand{\EEAS}{\end{eqnarray*}}
\newcommand{\BEQ}{\begin{equation}}
\newcommand{\EEQ}{\end{equation}}
\newcommand{\BIT}{\begin{itemize}}
\newcommand{\EIT}{\end{itemize}}
\newcommand{\BMAT}{\begin{bmatrix}}
\newcommand{\EMAT}{\end{bmatrix}}
\newcommand{\eg}{{\it e.g.}}
\newcommand{\ie}{{\it i.e.}}
\newcommand{\ones}{\mathbf 1}
\newcommand{\reals}{{\mbox{\bf R}}}
\newtheorem{theorem}{Theorem}
\newtheorem{proposition}[theorem]{Proposition}
\theoremstyle{definition}
\newcounter{oursection}
\newcommand{\adarad}{AdaRad}
\newcounter{algorithmctr}[section]
\renewcommand{\thealgorithmctr}{\thesection.\arabic{algorithmctr}}
\newenvironment{algdesc}%
{\refstepcounter{algorithmctr}\begin{list}{}{%
			\setlength{\rightmargin}{0\linewidth}%
			\setlength{\leftmargin}{.05\linewidth}}%
		\rmfamily\small
		\item[]{\setlength{\parskip}{0ex}\hrulefill\par%
			\nopagebreak{\bfseries\textsf{Algorithm \thealgorithmctr~}}}}%
	{{\setlength{\parskip}{-1ex}\nopagebreak\par\hrulefill} \end{list}}
\title{Operator Splitting for Adaptive Radiation Therapy with Nonlinear 
	Health Dynamics}
\author{Anqi Fu \and Lei Xing \and Stephen Boyd}
\begin{document}
\maketitle

\begin{abstract}
We present an optimization-based approach to radiation treatment planning 
over time. Our approach formulates treatment planning as an optimal control 
problem with nonlinear patient health dynamics derived from the standard 
linear-quadratic cell survival model. As the formulation is nonconvex, we 
propose a method for obtaining an approximate solution by solving a sequence 
of convex optimization problems. This method is fast, efficient, and robust 
to model error, adapting readily to changes in the patient's health between 
treatment sessions. Moreover, we show that it can be combined with the operator 
splitting method ADMM to produce an algorithm that is highly scalable and 
can handle large clinical cases. We introduce an open-source Python 
implementation of our algorithm, \adarad{}, and demonstrate its performance 
on several examples.
\end{abstract}

\section{Introduction}
In radiation therapy, beams of ionizing radiation are transmitted into a 
patient, damaging both tumor cells and normal tissue. The goal of radiation 
treatment planning is to deliver enough dose to the tumor so that diseased 
cells are killed, while avoiding excessive injury to the normal tissue and 
organs-at-risk (OARs). This is achieved by optimizing the beam intensity 
profile, or fluence map, subject to constraints on the dose to certain 
parts of the patient's anatomy. The fluence map optimization problem is 
well-studied \cite{RomeijnAhuja:2003,RomeijnLi:2004,AlemanGlaser:2010,Gao:2016}, 
and technology like intensity-modulated radiation therapy (IMRT) is now 
widespread in the clinic \cite{ZelefskyFuks:2000,WoldenChen:2006,GuptaAgarwal:2012,Webb:2010}. 

Treatment in practice usually takes place over multiple sessions. A 
clinician will divide up the total prescribed dose into smaller dose 
fractions, which are delivered over the course of several weeks or months. 
This permits normal tissue time to recover and repair cell damage, 
but also gives tumors an opportunity to proliferate, especially when the 
treatment course is long. A study of 4,338 prostate cancer patients showed 
that biochemical failure increases by 6\% for every 1 week increase in 
treatment time, with a dose equivalent of proliferation of 0.24 Gy/day 
\cite{ThamesKuban:2010}. Thus, an important question in treatment planning 
is how to choose the sequence of deliverable doses such that they balance 
these temporal effects on a patient's health.

\subsection{Related work}
\label{sec:lit_review}
Early clinical practitioners split the prescribed dose equally over a 
fixed number of sessions. While convenient, this method does not account 
for errors or uncertainty in the treatment process. For example, due to 
movements of the patient's anatomy, the expected dose may differ from the 
actual dose to an anatomical structure. If the actual dose is observable, 
a common way to compensate for this is to divide the residual dose (\ie, 
the difference between the prescribed and cumulative actual dose) across 
the remaining sessions. This then becomes the new per-session dose goal. 
In \cite{ZerdaXing:2007}, the authors solve for the beam intensities by 
minimizing the sum-of-squared difference between this goal dose and the 
expected dose. They compare the results when errors are perfectly known, 
so the expected dose is equal to the actual dose, with the results when 
errors are assumed to be zero. A similar approach is taken in 
\cite{FerrisVoelker:2004}, except the errors are modeled explicitly as a 
random shift in the surrounding voxels. Instead of the dose to each voxel, 
\cite{SirPollock:2012} work with the equivalent uniform dose (EUD), a 
value that captures the biological effect of a dose distribution over a 
region. Their objective is to minimize the sum of the EUD over all 
treatment criteria subject to bounds on the EUD of the tumor and normal 
tissues. To solve this problem, they employ methods from approximate 
dynamic programming coupled with a discrete probabilistic model of the 
dose error.

The papers we have discussed so far only focus on the dose to the patient. 
By contrast, \cite{KimPhillips:2009} introduces a Markov decision process 
model that includes both the dose (action) and the patient's health state. 
Each choice of dose induces a transition to a particular health state with 
some probability. Making this idea concrete, \cite{MizutaShirato:2012} 
define the health of a tumor (resp. OAR) to be the radiation (resp. damage) 
effect of the delivered dose, as calculated from the linear-quadratic (LQ) 
model of cell survival \cite{Fowler:1989}. They analyze a simple example 
with one tumor and one OAR and find that the optimal fractionation scheme 
is either a single session delivery of the full dose or equal dose 
fractions, depending on the relationship between the LQ parameters. The 
authors of \cite{BortfeldRTU:2015} extend this analysis to incorporate 
accelerated tumor repopulation and show that the dose per session increases 
over the treatment course. Using simulated annealing, \cite{YangXing:2005} 
solve a similar treatment planning problem based on the LQR model, which 
captures all 4 Rs (repair of sublethal damage, repopulation, redistribution, 
and reoxygenation) of cellular radiation response \cite{BrennerHlatky:1995}.

In \cite{ChanMisic:2013}, the authors take a probabilistic approach to 
patient dynamics. They model the patient's breathing motion using a 
probability mass function (PMF) over a finite set of states. They then 
solve a robust optimization problem that enforces dose bounds over a set 
of PMFs, which represent uncertainty during treatment. This uncertainty 
set is updated after the dose fraction is delivered, and the problem is 
re-solved for the next session. A follow-up paper \cite{MisicChan:2015} 
numerically studies the effects of adjusting the target dose based on the 
dose delivered to date. It shows that this type of adjustment can lead to 
a high degree of heterogeneity in the per-fraction dose distribution, 
which is undesirable from a medical standpoint. The paper recommends an 
alternative method of uncertainty set adaptation to mitigate these effects.

The above analyses provide insight into the tradeoffs between different 
fractionation schemes in a simple setting. However, most clinical cases are 
more complex, involving multiple tumors, OARs, and nonlinear constraints. 
For instance, dose-volume (\ie, percentile) constraints are widely used to 
limit the radiation exposure of a percentage of an anatomical structure, 
such as the spine. These constraints are nonconvex, but can be approximated 
by a convex restriction \cite{HalabiCraft:2006,ZST+Zinchenko:2013,FuUngunXing:2019}. 
In \cite{SaberianGhateKim:2016}, the authors consider a dynamic setting with 
multiple OARs and dose-volume constraints. Starting from a given set of 
beam intensities, they solve for the optimal number of sessions and 
OAR sparing factors. They also derive sufficient conditions under which the 
optimal treatment consists of equal dose fractions. In a follow-up paper 
\cite{SaberianGhateKim:2017}, the authors integrate the spatial and 
temporal aspects of the problem, treating both beam intensities and number 
of sessions as variables. Restricting their attention to equal fractions, 
they propose a two-stage solution algorithm: in the first stage, they solve 
for the optimal beams given each potential fixed number of sessions, and in 
the second stage, they select the number of sessions based on the optimal 
objectives from the first stage. They show that their method achieves 
better tumor ablation than conventional IMRT or the spatiotemporally 
separated method.

Perhaps the paper most similar to ours is \cite{KimPhillips:2012}. In it, 
the authors propose a stochastic control formulation of the adaptive 
treatment planning problem with multiple tumors and OARs. They estimate 
the radiation response of the tumors with a log-linear cell kill model and 
the response of the OARs with the standard LQ model. Their goal is to 
minimize the expected number of tumor cells at the end of treatment subject 
to bounds on the radiobiological impact on the OARs. Uncertainty arises in 
the cell model parameters, which may fluctuate randomly between sessions, 
representing unpredictable changes in the patient's health status. The 
authors fix the number of sessions and focus on optimizing with respect to 
the beam intensities. They show that their problem is convex, so can be 
solved using a combination of standard stochastic control methods and 
off-the-shelf convex solvers, and provide several examples demonstrating 
the effectiveness of their approach.

\subsection{Contribution}
\label{sec:contrib}
In this paper, we integrate the stochastic control approach with a 
distributed optimization algorithm to produce a method for efficient 
large-scale adaptive treatment planning. As clinical cases are quite 
complex, with tens of thousands of beams and treatment that takes place 
over months, such methods are necessary to construct plans in a timely 
fashion. (See \cite{JiaJiang:2014} for a review of previous work on 
high-performance computing in radiation therapy, particularly treatment 
optimization). We formulate the adaptive treatment planning problem as a 
finite-horizon nonconvex optimal control problem. To solve it, we introduce 
an operator splitting algorithm, which is based on solving a sequence of 
convex approximations. Our algorithm is naturally parallelizable and can 
handle a large number of beams, sessions, and anatomical targets or OARs. 
Moreover, it can be combined with model predictive control to produce 
treatment plans that are robust to errors and uncertainty about the 
patient's health status. We illustrate our algorithm's performance on a 
synthetic case, as well as a large prostate cancer case, and provide an 
implementation in the Python package \adarad{}: \url{https://github.com/anqif/adarad}.

\section{Problem formulation}
\label{sec:prob}
In radiation treatment, beams of ionizing radiation are delivered to a 
patient from an external source. The goal is to damage or kill diseased 
tissue, while minimizing harm to surrounding healthy organs. A course 
of treatment is generally divided into $T$ sessions. At the start of 
session $t$, the clinician chooses the intensity levels of the $n$ 
beams, denoted by $b_t \in \reals_+^n$. Typically, $T \approx 20$ and 
$n$ is on the order of $10^3$ to $10^4$. We are interested in 
determining the best sequence of beam intensities 
$b = (b_1,\ldots,b_T)$, otherwise known as a {\em treatment plan}, 
subject to upper bounds $B_t \in \bar\reals_+^n$ on $b_t$ for 
$t=1,\ldots,T$.

\paragraph{Anatomy and doses.} The beams irradiate an area containing 
$K$ anatomical structures, labeled $i \in \{1,\ldots,K\}$, where usually 
$K < 10$.  A subset $\mathcal{T} \subset \{1,\ldots,K\}$ are targets/tumors 
and the rest are OARs. The dose delivered to each structure is linear in 
the beam intensities. We write the dose vector $d_t = A_tb_t$ with $A_t \in 
\reals_+^{K \times n}$ a known matrix that characterizes the physical 
effects and define $d = (d_1,\ldots,d_T)$. Notice that since $b_t$ and 
$A_t$ are nonnegative, $d_t \geq 0$.

In every session, we impose a penalty on $d_t$ 
via a {\em dose penalty function} $\phi_t: \reals^K \rightarrow \reals 
\cup \{\infty\}$. A common choice is
\[
	\phi_t(d_t) = \theta_t^Td_t + \xi_t^Td_t^{\circ 2},
\]
where $\theta_t \in \reals^K$ and $\xi_t \in \reals_+^K$ are constants. Here 
$d_t^{\circ 2} = d_t \odot d_t$ denotes the vector $d_t$ with each element 
squared.   
The total dose penalty over all sessions is
\[
	\phi(d) = \sum_{t=1}^T \phi_t(d_t).
\]
Additionally, we enforce upper bound constraints $d_t \leq D_t$, where 
$D_t \in \bar \reals_+^K$ is the maximum dose in session $t$.

\paragraph{Health dynamics.} To assess treatment progress, we examine the 
health status of each anatomical structure and encode these statuses in a 
vector $h_t \in \reals^K$. For now, the details of this encoding do not 
matter. Typically, $h_{ti}$ represents an estimate of the total surviving 
cells in structure $i$. Hence if $i \in \mathcal{T}$, a smaller $h_{t,i}$ is 
desirable (since the tumor is shrinking), while if $i \notin \mathcal{T}$, 
a larger $h_{t,i}$ is desirable.

From an initial $h_0$, the health status evolves in response to the 
radiation dose and various other biophysical factors  that depend on 
the patient's anatomy, generating a health trajectory 
$h = (h_1,\ldots,h_T)$. Here we represent its dynamics as
\BEQ
\label{eq:health_dynamics}
	h_t = f_t(h_{t-1}, d_t), \quad t = 1,\ldots,T,
\EEQ
where $f_t:\reals^K \times \reals^K \rightarrow \reals^K$ is a known 
mapping function. In this paper, we focus on the linear-quadratic (LQ) 
model in which
\BEQ
\label{eq:health_dyn_quad}
	f_{t,i}(h_{t-1}, d_t) = h_{t-1,i} - \alpha_{t,i}d_{t,i} - 
		\beta_{t,i}d_{t,i}^2 + \gamma_{t,i}, \quad i = 1,\ldots,K, 
	\quad t = 1,\ldots,T
\EEQ
with constants $\alpha_t \in \reals^K, \beta_t \in \reals_+^K$, and 
$\gamma_t \in \reals^K$. This model is commonly used to approximate 
cellular response to radiation \cite{Fowler:1989,ThamesHendry:1987,Brenner:2008}. 
Specifically, in the LQ + time framework \cite{TravisTucker:1987}, $h_{t,i}$ 
is the log of the fraction of surviving cells in structure $i$ after a dose 
$d_{t,i}$, while $\alpha_{t,i}/\beta_{t,i}$ and $\gamma_{t,i}$ are constants 
related to the structure's survival curve and repair/repopulation rate, 
respectively. Notice that equation (\ref{eq:health_dyn_quad}) implies that 
the health status of each structure evolves independently of the others.

\paragraph{Health penalty and constraints.} In order to control the 
patient's health, we introduce a {\em health penalty function} $\psi_t: 
\reals^K \rightarrow \reals \cup \{\infty\}$ that imposes a penalty on 
$h_t$. Moreover, we assume that
\BEQ
\label{cond:health_mono}
\psi_t(h_t) = \psi_t(h_{t,1},\ldots,h_{t,K}) \mbox{ is monotonically }
\begin{cases}
	\mbox{increasing in } h_{t,i} & i \in \mathcal{T} \\
	\mbox{decreasing in } h_{t,i} & i \notin \mathcal{T}
\end{cases}
\EEQ
for $t = 1,\ldots,T$. This means that for a target, the health penalty 
increases as the health status increases, while for an organ-at-risk, 
the health penalty decreases as the health status increases. The 
assumption is reasonable if, for instance, $h_t$ is a measure of cell 
survival in session $t$, so a lower (higher) status is desirable for a 
target (organ-at-risk). An example of a penalty function that satisfies 
(\ref{cond:health_mono}) is
\[
	\psi_t(h_t) = \overline w^T(h_t - h_t^{\text{goal}})_+ + 
		\underline w^T(h_t - h_t^{\text{goal}})_-,
\]
where $h_t^{\text{goal}} \in \reals^K$ is the desired health status and 
$\underline w \in \reals_+^K$ and $\bar w \in \reals_+^K$ are parameters 
with $\underline w_i = 0$ for $i \in \mathcal{T}$ and $\overline w_i = 0$ 
for $i \notin \mathcal{T}$. Here we define $(x)_+ = \max(x, 0)$ and 
$(x)_- = -\min(x, 0)$ applied elementwise to $x$. The total health penalty 
is
\[
	\psi(h) = \sum_{t=1}^T \psi_t(h_t).
\]
In addition, we enforce bounds $H_t \in \bar \reals^K$ on the health 
status such that $h_{t,i} \leq H_{t,i}$ for $i \in \mathcal{T}$ and 
$h_{t,i} \geq H_{t,i}$ for $i \notin \mathcal{T}$.

\paragraph{Optimal control problem.} 
Given an initial health status $h_0$, we wish to select a treatment plan 
that minimizes the total penalty across all sessions. Thus, our problem 
is
\BEQ
\label{prob:dyn_single}
\begin{array}{lll}
	\mbox{minimize} & \sum_{t=1}^T \phi_t(d_t) + 
		\sum_{t=1}^T \psi_t(h_t) & \\
	\mbox{subject to} & h_t = f_t(h_{t-1}, d_t), 
		\quad & t = 1,\ldots,T, \\
	& h_{t,i} \leq H_{t,i},~ i \in \mathcal{T}, 
		\quad h_{t,i} \geq H_{t,i},~ i \notin \mathcal{T}, 
		\quad & t = 1,\ldots,T, \\
	& d_t = A_tb_t, \quad 0 \leq d_t \leq D_t, 
 \quad 0 \leq b_t \leq B_t, \quad & t = 1,\ldots,T
\end{array}
\EEQ
with variables $(b_1,\ldots,b_T), (d_1,\ldots,d_T)$, and 
$(h_1,\ldots,h_T)$. This is a discrete-time optimal control problem. If 
$\phi_t$ and $\psi_t$ are convex and $f_t$ is affine, \eg, $f_t$ is given 
by (\ref{eq:health_dyn_quad}) with quadratic dose effect $\beta_t = 0$, it 
is also convex and can be solved directly using standard convex solvers. 

\section{Lossless relaxation}
\label{sec:relaxation}
For the remainder of this paper, we restrict our attention to a convex 
objective function and linear-quadratic health dynamics 
(\ref{eq:health_dyn_quad}). In this case, condition 
(\ref{cond:health_mono}) allows us to relax the health dynamics constraint 
so problem (\ref{prob:dyn_single}) can be written equivalently as
\BEQ
\label{prob:relax}
\begin{array}{lll}
	\mbox{minimize} & \sum_{t=1}^T \phi_t(d_t) + 
		\sum_{t=1}^T \psi_t(h_t) & \\
	\mbox{subject to} & h_{t,i} \geq f_{t,i}(h_{t-1}, d_t),
		~ i \in \mathcal{T}, \quad & t = 1,\ldots,T, \\
	& h_{t,i} \leq f_{t,i}(h_{t-1}, d_t),~ i \notin \mathcal{T}, 
		\quad & t = 1,\ldots,T, \\
	& h_{t,i} \leq H_{t,i},~ i \in \mathcal{T}, \quad h_{t,i} \geq H_{t,i},~ 
		i \notin \mathcal{T}, \quad & t = 1,\ldots,T, \\
	& d_t = A_tb_t, \quad 0 \leq d_t \leq D_t, 
	\quad 0 \leq b_t \leq B_t, \quad & t = 1,\ldots,T.
\end{array}
\EEQ
The equality constraint $h_t = f_t(h_{t-1}, d_t)$ has been replaced with 
two inequality constraints: a lower bound for targets and an upper bound 
for OARs. Notice that the first inequality is the only nonconvex constraint 
in (\ref{prob:relax}). Our relaxed problem has the same solution set as 
(\ref{prob:dyn_single}) because these two inequalities are tight at the 
optimum.

\begin{proposition}
\label{prop:relax_tight}
~
Let $(b^{\star},d^{\star},h^{\star})$ be a solution to problem 
(\ref{prob:relax}). If conditions (\ref{eq:health_dyn_quad}) and 
(\ref{cond:health_mono}) hold, 
\[
	h_t^{\star} = f_t(h_{t-1}^{\star}, d_t^{\star}), \quad t = 1,\ldots,T.
\]
\end{proposition}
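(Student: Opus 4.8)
The plan is to argue by contradiction using a single-coordinate perturbation, exploiting two structural features. First, the penalty $\psi_t$ is monotone in each $h_{t,i}$ with a sign fixed by whether $i$ is a target or an OAR, per (\ref{cond:health_mono}). Second, the dynamics $f_{t,i}$ in (\ref{eq:health_dyn_quad}) depend on the past only through $h_{t-1,i}$ (same structure, one step back) and are increasing in that argument, with coefficient $+1$. Because of this the health trajectories decouple across structures, so it suffices to fix a structure $i$ and a session $t$ and perturb the single entry $h_{t,i}^{\star}$.

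Suppose toward a contradiction that the relevant inequality is slack at some $(t,i)$. Take first a target $i \in \mathcal{T}$, where the relaxed constraint in (\ref{prob:relax}) reads $h_{t,i}^{\star} \ge f_{t,i}(h_{t-1}^{\star}, d_t^{\star})$, and assume it holds strictly. I would lower $h_{t,i}^{\star}$ by a small $\epsilon > 0$, leaving every other coordinate fixed, and check feasibility. The constraint at $(t,i)$ survives for small $\epsilon$ since it was strict; the bound $h_{t,i} \le H_{t,i}$ is preserved because we only decreased $h_{t,i}$; and the dose/beam constraints are untouched. The only other constraint that sees $h_{t,i}$ is the dynamics at session $t+1$, namely $h_{t+1,i}^{\star} \ge f_{t+1,i}(h_t^{\star}, d_{t+1}^{\star})$, and since $f_{t+1,i}$ is increasing in $h_{t,i}$, decreasing $h_{t,i}$ only lowers its right-hand side, loosening the constraint. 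Hence the perturbed point is feasible, and monotonicity of $\psi_t$ in $h_{t,i}$ prevents the objective from increasing, contradicting optimality. The OAR case $i \notin \mathcal{T}$ is symmetric: the relaxed constraint is $h_{t,i}^{\star} \le f_{t,i}$, I would raise $h_{t,i}^{\star}$ instead, the bound $h_{t,i} \ge H_{t,i}$ and the session-$(t+1)$ constraint $h_{t+1,i}^{\star} \le f_{t+1,i}$ both loosen, and the decreasing monotonicity of $\psi_t$ again forbids an increase.

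The step I expect to be the crux is the sign bookkeeping in the temporal coupling: one must verify that moving $h_{t,i}^{\star}$ in the penalty-improving direction always \emph{relaxes}, never tightens, the next session's dynamics constraint. This works precisely because the direction of the relaxed inequality (a lower bound for targets, an upper bound for OARs) is aligned with the direction in which $f_{t+1,i}$ moves as $h_{t,i}$ changes; getting one of these signs wrong would break the argument. An equivalent and perhaps cleaner route I would keep in reserve is constructive: define $\hat h$ by running the dynamics forward with equality, $\hat h_{t,i} = f_{t,i}(\hat h_{t-1}, d_t^{\star})$, and show by induction on $t$, using monotonicity of $f_{t,i}$ in $h_{t-1,i}$, that $\hat h_{t,i} \le h_{t,i}^{\star}$ for targets and $\hat h_{t,i} \ge h_{t,i}^{\star}$ for OARs; then $\hat h$ is feasible with $\psi(\hat h) \le \psi(h^{\star})$, so it too is optimal and tight by construction.

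One genuine subtlety I would want to pin down is whether ``monotonically increasing/decreasing'' in (\ref{cond:health_mono}) is meant strictly. Under strict monotonicity the perturbation strictly lowers the objective, so the contradiction is immediate and \emph{every} solution must be tight, as stated. If $\psi_t$ is only weakly monotone, as in the displayed example that is flat on one side of $h_t^{\text{goal}}$, the perturbation is merely nonincreasing and one can conclude only that a tight optimal solution exists, which still establishes losslessness of the relaxation. I would therefore either invoke strict monotonicity at this point or, equivalently, read the conclusion as asserting the existence of an optimal $(b^{\star}, d^{\star}, \hat h)$ satisfying the equality.
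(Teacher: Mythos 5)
Your argument is essentially identical to the paper's proof: the authors also argue by contradiction, perturb the single slack coordinate $h_{t,i}^{\star}$ by a small $\epsilon$ toward $f_{t,i}(h_{t-1}^{\star}, d_t^{\star})$, verify feasibility of the perturbed point using monotonicity of $f_{t+1,i}$ in $h_{t,i}$ together with the bound $H_{t,i}$, and invoke (\ref{cond:health_mono}) to contradict optimality, treating the OAR case symmetrically. Your caveat about strict versus weak monotonicity is well taken --- the paper's proof asserts $\psi_t(\hat h_t) < \psi_t(h_t^{\star})$ without comment even though its own example penalty is only weakly monotone, so as stated the proposition really needs strict monotonicity, or else the conclusion should be read as the existence of a tight optimal solution (which still suffices for losslessness of the relaxation).
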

\begin{proof}
Suppose there exist some $t \in \{1,\ldots,T\}$ and $i \in \mathcal{T}$ 
such that $h_{t,i}^{\star} > f_{t,i}(h_{t-1}^{\star}, d_t^{\star})$. Then, 
we can choose an $\epsilon > 0$ such that $h_{t,i}^{\star} > h_{t,i}^{\star} - 
\epsilon > f_{t,i}(h_{t-1}^{\star}, d_t^{\star})$. Since 
$f_{s,i}(h_{s-1}, d_s)$ is nondecreasing in $h_{s-1,i}$ for all 
$s \in \{1,\ldots,T\}$, the point $(b^{\star},d^{\star},\hat h)$ with
\[
	\hat h_{s,j} = \begin{cases}
		h_{s,j}^{\star} - \epsilon & s = t,~j = i \\
		h_{s,j}^{\star} & \mbox{otherwise}
	\end{cases}
\]
is feasible for problem (\ref{prob:relax}) because 
\[
	\hat h_{t,i} > f_{t,i}(h_{t-1}^{\star}, d_t^{\star}) \geq f_{t,i}(\hat h_{t-1}, d_t^{\star}), 
	\quad h_{t+1,i}^{\star} \geq f_{t+1,i}(h_t^{\star}, d_t^{\star}) \geq f_{t+1,i}(\hat h_t, d_t^{\star}), 
\]
and $\hat h_{t,i} < h_{t,i}^{\star} \leq H_{t,i}$. Moreover, by condition 
(\ref{cond:health_mono}), $\psi_t(\hat h_t) < \psi_t(h_t^{\star})$ so 
$(b^{\star},d^{\star},\hat h)$ achieves a lower objective value than 
$(b^{\star},d^{\star},h^{\star})$, contradicting our original assumption. 
An analogous argument holds for $t \in \{1,\ldots,T\}$ and $i \notin 
\mathcal{T}$ such that $h_{t,i}^{\star} < f_{t,i}(h_{t-1}^{\star}, d_t^{\star})$ 
with $\hat h_{t,i} = h_{t,i}^{\star} + \epsilon$.
\end{proof}

\section{Sequential convex optimization}
\label{sec:seq_cvx_opt}

\subsection{Algorithm description}
\label{sec:ccp}
Problem (\ref{prob:relax}) is in general nonconvex because the target's health 
dynamics constraint
\BEQ
\label{constr:target_relax}
	h_{t,i} \geq f_{t,i}(h_{t-1}, d_t),~ i \in \mathcal{T}, \quad t = 1,\ldots,T
\EEQ
is nonconvex when any $\beta_t \neq 0$. However, we can derive an estimate 
of its optimum by solving a sequence of convex approximations. Each 
approximation is formed by linearizing the health dynamics function 
(\ref{eq:health_dyn_quad}) around a fixed dose point and replacing the 
right-hand side of (\ref{constr:target_relax}) with this linearization 
minus a slack variable. The slack allows for a degree of error in the 
approximation and is penalized in the objective.

More precisely, let $d_t^s \in \reals^K$ for $t = 1,\ldots,T$. Define 
the linearized dynamics function
\BEQ
\label{eq:dyn_target_lin}
	\hat f_{t,i}(h_{t-1}, d_t; d_t^s) = h_{t-1,i} - \alpha_{t,i}d_{t,i} - 
	\beta_{t,i}d_{t,i}^s(2d_{t,i} - d_{t,i}^s) + \gamma_{t,i},
	\quad i = 1,\ldots,K.
\EEQ
This function is an upper bound on the LQ function (\ref{eq:health_dyn_quad}) 
because $\beta_t \geq 0$. We replace the nonconvex constraint 
(\ref{constr:target_relax}) in problem (\ref{prob:relax}) with the affine 
constraint
\BEQ
\label{constr:target_lin_eq}
	h_{t,i} = \hat f_{t,i}(h_{t-1}, d_t; d_t^s) - \delta_{t,i},~ 
		i \in \mathcal{T}, \quad t = 1,\ldots,T,
\EEQ
where $\delta_t \in \reals_+^K$ is a slack variable. (The inequality 
can been tightened into an equality due to Proposition 
\ref{prop:relax_tight}). Convex approximation $s$ is then
\BEQ
\label{prob:ccp_slack}
\begin{array}{lll}
	\mbox{minimize} & \sum_{t=1}^T \phi_t(d_t) + \sum_{t=1}^T \psi_t(h_t) + 
		\lambda \sum_{t=1}^T \ones^T \delta_t & \\
	\mbox{subject to} & h_{t,i} = \hat f_{t,i}(h_{t-1}, d_t; d_t^s) - \delta_{t,i},
	~ i \in \mathcal{T}, \quad \delta_t \geq 0 \quad & t = 1,\ldots,T, \\
	& h_{t,i} \leq f_{t,i}(h_{t-1}, d_t),~ i \notin \mathcal{T}, 
	\quad & t = 1,\ldots,T, \\
	& h_{t,i} \leq H_{t,i},~ i \in \mathcal{T}, \quad h_{t,i} \geq H_{t,i},~ 
	i \notin \mathcal{T}, \quad & t = 1,\ldots,T, \\
	& d_t = A_tb_t, \quad 0 \leq d_t \leq D_t, 
	\quad 0 \leq b_t \leq B_t, \quad & t = 1,\ldots,T
\end{array}
\EEQ
with variables $(b_1,\ldots,b_T), (d_1,\ldots,d_T), (h_1,\ldots,h_T)$, and 
$(\delta_1,\ldots,\delta_T)$ and slack penalty parameter $\lambda > 0$. The 
parameter $\lambda$ is typically determined empirically or by trial-and-error. 
This problem is convex and can be solved using standard convex solvers. 
Given a solution to (\ref{prob:ccp_slack}), we set the next linearization 
point $d^{s+1} = (d_1^{s+1},\ldots,d_T^{s+1})$ equal to the optimal dose.

\begin{algdesc}
\label{algo:seq_cvx}
	\emph{Sequential convex optimization.}
	\begin{tabbing}
		{\bf input}: initial point $d^0$, parameter	$\lambda > 0$. 
			\\*[\smallskipamount]
		{\bf for} $s = 0,1,\ldots$ {\bf do} \\
		\qquad \= 1.\ \emph{Linearize.} For $t = 1,\ldots,T$, form the 
			linearization (\ref{eq:dyn_target_lin}) around $d_t^s$. \\
		\> 2.\ \emph{Solve.} Set $d^{s+1}$ equal to an optimal dose of 
			problem (\ref{prob:ccp_slack}). \\
		{\bf until} stopping criterion (\ref{cond:seq_cvx_stop}) is satisfied.
	\end{tabbing}
\end{algdesc}

Algorithm \ref{algo:seq_cvx} is a special case of the convex-concave 
procedure (CCP) \cite{YuilleRangarajan:2003,LippBoyd:2016,DCCP}, which is 
itself a form of majorization-minimization 
\cite{HunterLange:2012,SunBabuPalomar:2017}. CCP is a heuristic for 
finding a local optimum of a nonconvex optimization problem. It is 
guaranteed to converge; indeed, when certain differentiability conditions 
are met, it converges to a stationary point \cite{SriperLanck:2009}. 
As a descent algorithm, CCP is usually terminated when the change in 
the objective falls below some user-specified threshold $\epsilon > 0$, 
\ie, 
\BEQ
\label{cond:seq_cvx_stop}
	p_{\text{opt}}^s - p_{\text{opt}}^{s+1} < \epsilon,
\EEQ
where $p_{\text{opt}}^s$ is the optimal objective of problem 
(\ref{prob:ccp_slack}).  
In our simple experiments, we have found that an initial linearization 
point of $d^0 = 0$ and threshold of $\epsilon = 10^{-3}$ produce good 
results.

\subsection{Illustrative example}
\label{sec:ex_simple}
\paragraph{Problem instance.} We consider an example with $n = 1000$ beams 
divided into $50$ bundles of $20$ parallel beams each, positioned evenly 
around a half-circle. There are $K = 5$ structures, a single target 
$\mathcal{T} = \{1\}$ and four OARs (including generic body voxels) 
depicted in Figure \ref{fig:ex1_structs}. Treatment takes place 
over $T = 20$ sessions, so the basic problem has $nT + 2KT = 20200$ 
variables.

\begin{figure}
	\begin{center}
		\includegraphics[height=0.35\textheight]{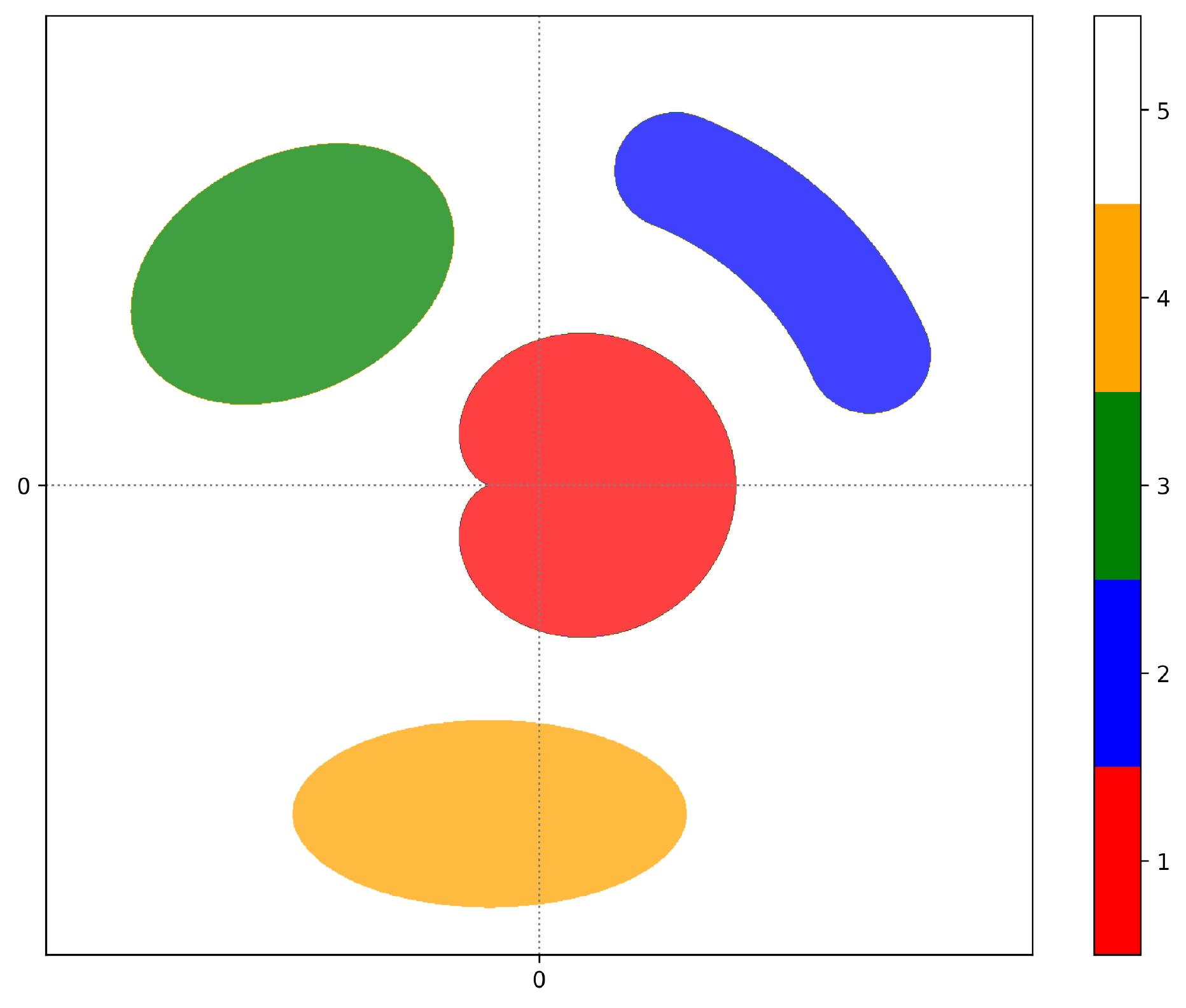}
	\end{center}
	\caption{Anatomical structures for Example \ref{sec:ex_simple}. Red is 
		the target ($i = 1$), while green ($i = 2$), blue ($i = 3$), and 
		orange ($i = 4$) are specific OARs. White denotes the non-target 
		body voxels ($i = 5$).}
	\label{fig:ex1_structs}
\end{figure}

The patient's initial health status is $h_0 = (1,0,0,0,0)$. His status 
evolves according to equation (\ref{eq:health_dyn_quad}) with
\begin{align*}
	\alpha_t &= (0.01, 0.50, 0.25, 0.15, 0.005), \\
	\beta_t  &= (0.001, 0.05, 0.025, 0.015, 0.0005), \\
	\gamma_t &= (0.05, 0, 0, 0, 0)
\end{align*}
over all sessions $t = 1,\ldots,T$. 

We set the health penalty function to
\[
	\psi_t(h_t) = (h_{t,1})_+ + \sum_{i=2}^5 (h_{t,i})_-, 
		\quad t = 1,\ldots,T. 
\]
This function penalizes positive statuses of the target and negative 
statuses of the OARs. Moreover, we constrain the target's health status 
to be $h_{t,1} \leq 2.0$ for $t = 1,\ldots,15$ and $h_{t,1} \leq 0.05$ for the 
remaining sessions, and we enforce a bound on the other structures' health 
statuses of $(h_{t,2}, h_{t,3}, h_{t,4}, h_{t,5}) \geq (-1.0, -2.0, -2.0, -3.0)$. 
Thus,
\[
	H_t = \begin{cases}
		(2.0, -1.0, -2.0, -2.0, -3.0) & t = 1,\ldots,15 \\
		(0.05, -1.0, -2.0, -2.0, -3.0) & t = 16,\ldots,T.
	\end{cases}
\] 

For the dose penalty function, we choose
\[
	\phi_t(d_t) = \sum_{i=1}^4 d_{t,i}^2 + 0.25d_{t,5}^2, \quad t = 1,\ldots,T.
\]
In addition, we restrict the dose and beam intensity to be no more than 
$D_t = 20$ and $B_t = 1.0$, respectively, over all sessions $t$.

\paragraph{Computational details.} We implemented Algorithm 
\ref{algo:seq_cvx} in Python using CVXPY \cite{CVXPY} and solved problem 
(\ref{prob:ccp_slack}) with MOSEK \cite{MOSEK}. From an initial $d^0 = 0$ 
and $\lambda = 10^4$, the algorithm converged in $11$ iterations to a 
threshold of $\epsilon = 10^{-3}$. Total runtime was approximately $17$ 
seconds on a 64-bit Ubuntu OS desktop with $8$ 4-core Intel i7-4790k / $4.00$ 
GHz CPUs and $16$ GB of RAM.

\paragraph{Results and analysis.} The optimal treatment plan is depicted in 
Figure \ref{fig:ex1_beams}. Beams are densely clustered diagonal from the 
vertical, striking the target while largely sparing the OARs. As the 
sessions continue, the number of beams slowly increases, damaging some 
of the less sensitive organs ($i=3$ and $4$). Then at $t=16$, when the 
target's health bound becomes more stringent, the beam density drops 
precipitously so that only a narrow bundle remains focused on the target, 
keeping its health status at the desired level.

Figure \ref{fig:ex1_traj} shows the radiation dose and health status 
resulting from this plan. The latter was computed by plugging the optimal 
dose into equation (\ref{eq:health_dyn_quad}). Total dose to the target 
($i=1$) and body voxels ($i=5$) far exceed the dose to any other 
structures. By the end of treatment, the target's health status has fallen 
to a steady $0.05$, while the health statuses of the OARs remain within 
their respective lower limits.

\begin{figure}
	\begin{center}
		\includegraphics[width=0.95\textwidth]{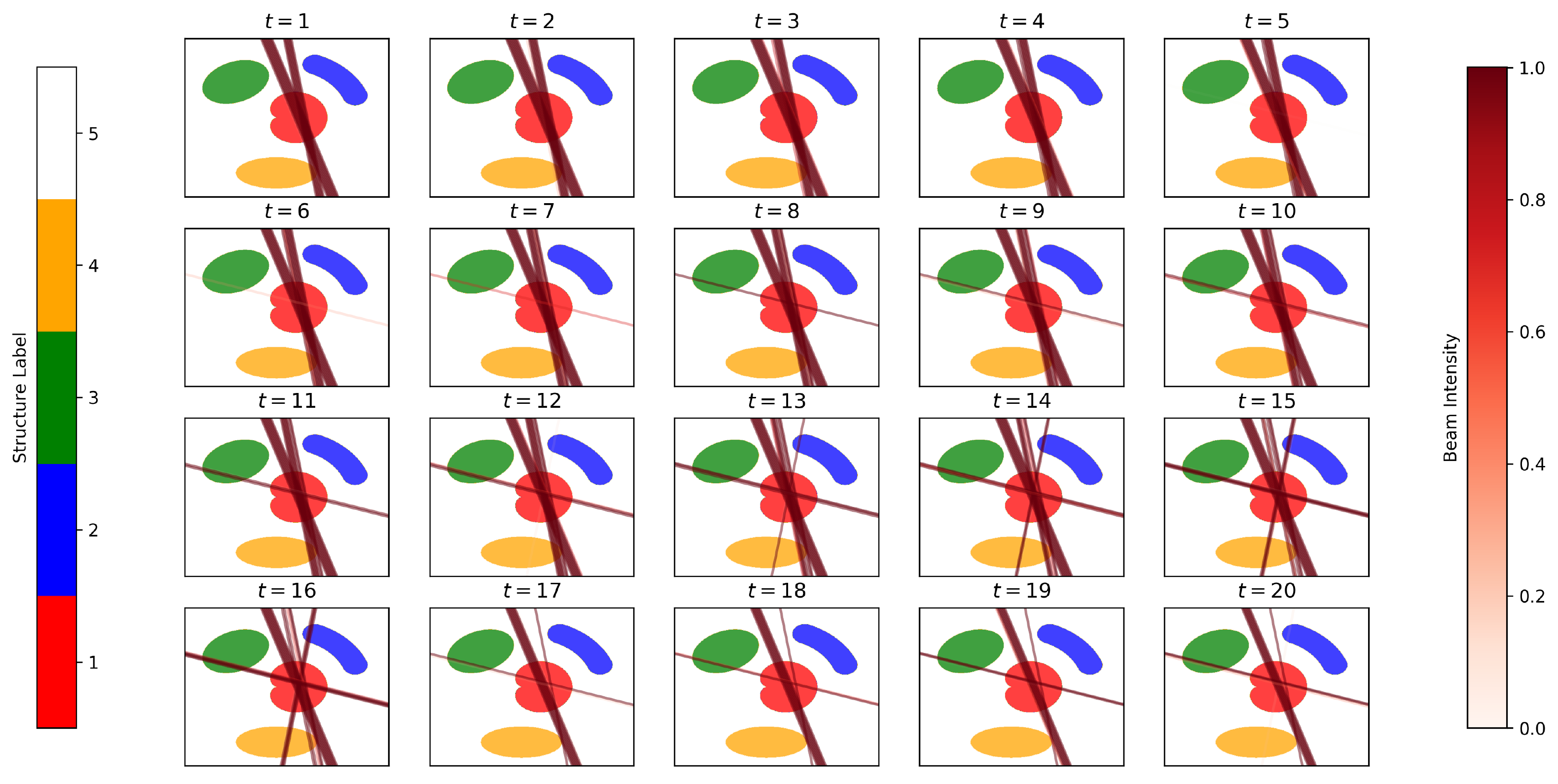}
	\end{center}
	\caption{Optimal beam intensities for Example \ref{sec:ex_simple}.}
	\label{fig:ex1_beams}
\end{figure}

\begin{figure}
	\begin{center}
		\begin{subfigure}[b]{0.95\textwidth}
			\caption{Dose Trajectories}
			\label{fig:ex1_traj_dose}
			\includegraphics[width=\textwidth]{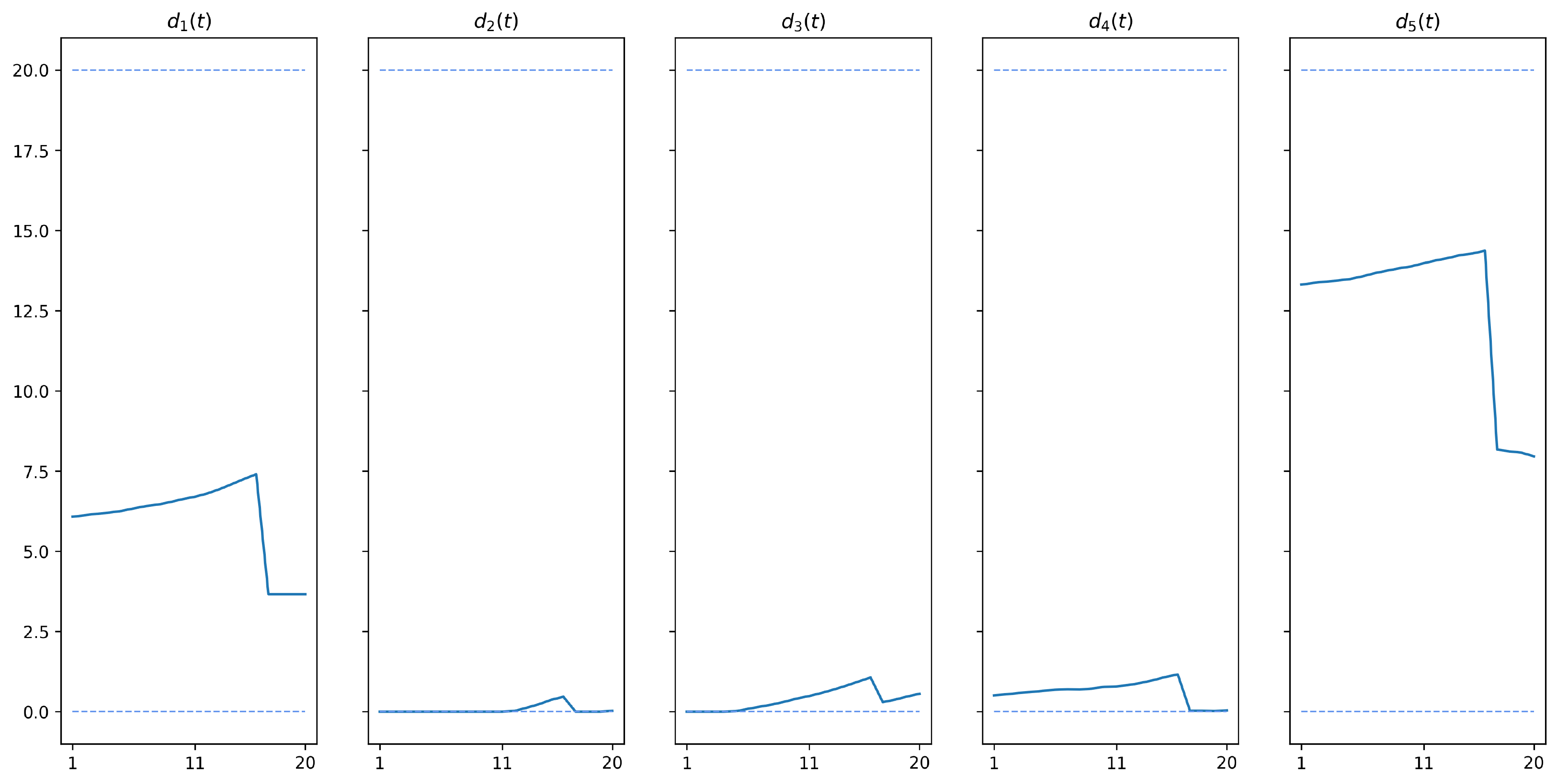}
		\end{subfigure}
		\par\bigskip
		\begin{subfigure}[b]{0.95\textwidth}
			\caption{Health Trajectories}
			\label{fig:ex1_traj_health}
			\includegraphics[width=\textwidth]{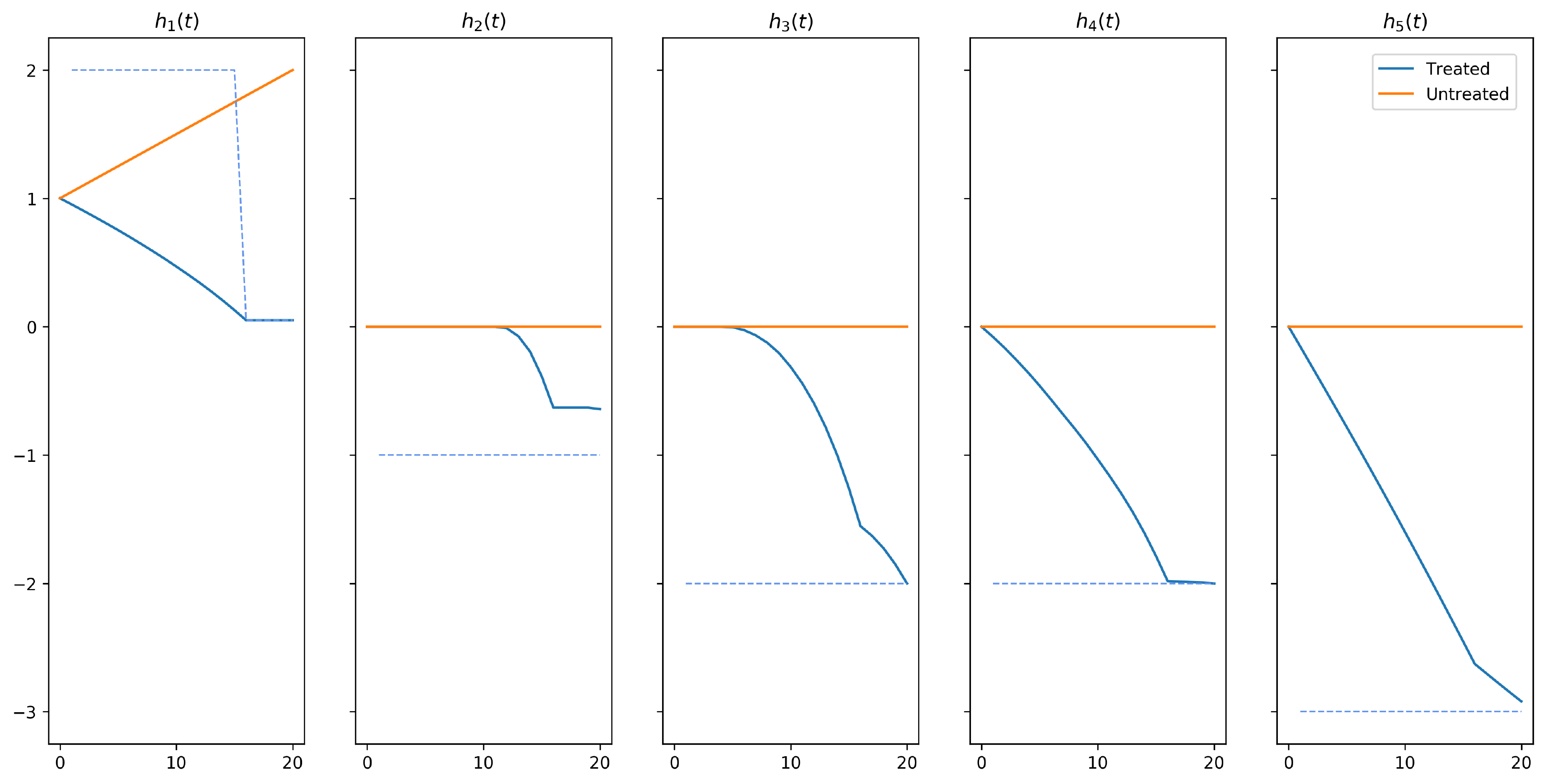}
		\end{subfigure}
		\caption{Optimal (a) radiation dose and (b) health status 
			trajectories for Example \ref{sec:ex_simple}. Dashed lines 
			represent the bounds on the dose and health status ($D_{t,i}$ 
			and $H_{t,i}$, respectively).}
		\label{fig:ex1_traj}
	\end{center}
\end{figure}

\clearpage

\section{Model predictive control}
\label{sec:mpc}

\subsection{Algorithm description}
\label{sec:mpc_desc}
So far, we have assumed that at the time of planning, $f_t$ perfectly 
captures the health dynamics from $t=1,\ldots,T$. This is rarely true in 
practice. A patient's anatomy changes unpredictably between sessions, 
affecting the dispersion of radiation beams and the course of their health 
status. We can incorporate these changes into problem (\ref{prob:dyn_single}) 
using model predictive control (MPC). 

MPC is a powerful technique for automatic control of complex, nonlinear, 
stochastic systems. It performs extremely well even when the dynamics are 
approximated by a simple model, since the system's state is updated 
regularly and new information is incorporated into the solution. This is 
particularly fitting for radiation treatment planning. 

As is customary in MPC, we first convert the state variable constraints 
in the original problem into soft constraints, \ie, we remove the 
inequality constraints on $h$ in (\ref{prob:dyn_single}) and add a 
penalty for violating them to the objective. Let $c_{\tau}: \reals^K 
\rightarrow \reals$ be the corresponding {\em health violation penalty 
function}, defined as
\[
	c_{\tau}(h_{\tau}) = \sum_{i \in \mathcal{T}} (h_{\tau, i} - H_{\tau, i})_+ 
	+ \sum_{i \notin \mathcal{T}} (H_{\tau, i} - h_{\tau, i})_+, 
	\quad \tau = 1,\ldots,T.
\]
This penalty function allows us to accommodate new and unexpected changes in 
the patient's health, such as the metastasis of a tumor that renders it 
impossible to control without exceeding the health damage limit of an OAR.

We are now ready to describe MPC for our model. At the beginning of each session 
$t$, we observe $A_t, f_t$, and the patient's true health status, $h_{t-1}$, then 
form the problem
\BEQ
\label{prob:dyn_single_mpc_slack}
\begin{array}{lll}
	\mbox{minimize} & \sum_{\tau=t}^T \phi_{\tau}(d_{\tau}) + 
	\sum_{\tau=t}^T \psi_{\tau}(h_{\tau}) + 
	\eta \sum_{\tau=t}^T c_{\tau}(h_{\tau}) & \\
	\mbox{subject to} & h_{\tau} = f_t(h_{\tau-1}, d_{\tau}),
	\quad & \tau = t,\ldots,T, \\
	& d_{\tau} = A_t b_{\tau}, 
	\quad 0 \leq d_{\tau} \leq D_{\tau}, 
	\quad 0 \leq b_{\tau} \leq B_{\tau}, 
	\quad & \tau = t,\ldots,T
\end{array}
\EEQ
with variables $(b_t,\ldots,b_T), (d_t,\ldots,d_T)$, and $(h_t,\ldots,h_T)$ 
and violation penalty parameter $\eta > 0$. Since $c_{\tau}$ is convex, 
problem (\ref{prob:dyn_single_mpc_slack}) is convex and can be solved 
using a slight variation on Algorithm \ref{algo:seq_cvx}. Let $\bar b = 
(\bar b_t,\ldots, \bar b_T)$ be the optimal treatment plan. We carry out 
only the first treatment, $\bar b_t$, and update our observations $A_{t+1},
f_{t+1}$, and $h_t$ based on the patient's response. This process repeats 
until all $T$ sessions have been completed.

\subsection{Illustrative example}
\label{sec:ex_simple_mpc}
\paragraph{Problem instance.} We return to the setting of Example 
\ref{sec:ex_simple}, except now, the health dynamics are modeled with some 
error. Specifically, let $h_{t-1}$ be the patient's health status at the 
beginning of session $t$ and $d_t$ the dose delivered during session $t$. 
Our model predicts the status will become $\hat h_t = f_t(h_{t-1},d_t)$. 
In fact, at the beginning of the next session, we observe the true health 
status to be
\[
	(h_t)_i = \begin{cases}
	\max(\hat h_t + \omega_t, 0)_i & i \in \mathcal{T} \\
	\min(\hat h_t + \omega_t, 0)_i & i \notin \mathcal{T}
	\end{cases},
\]
where $\omega_t \in \reals^K$ is drawn from $N(\mu,\sigma^2I)$. This 
random process continues for $t = 1,\ldots,T$. 

For this example, we choose $\mu = 0$ and $\sigma = 0.1$. The rest of the 
functions and parameter values are identical to \ref{sec:ex_simple}. In 
particular, we still employ the LQ model (\ref{eq:health_dyn_quad}) with 
constant $\alpha_t, \beta_t$, and $\gamma_t$ even though the health status 
is now stochastic. We plan the treatment using MPC with $\eta = 10^4$ and 
compare the results to those generated by the naive approach, which simply 
solves problem (\ref{prob:dyn_single}) once prior to session 1.

\paragraph{Computational details.} We solved problem 
(\ref{prob:dyn_single_mpc_slack}) using Algorithm \ref{algo:seq_cvx} with 
$\lambda = 10^4$ and $\epsilon = 10^{-3}$. For the initial dose in session 
1, we chose $d^0 = 0$. In each subsequent session $t$, we set $d^0$ to be 
the (truncated) optimal dose point from the previous session, 
$(d_t^{\star},\ldots,d_T^{\star})$. With these parameters, the algorithm 
took an average of 7 iterations per session to achieve convergence; most 
runs completed in only 3--4 iterations. The total runtime was 116 seconds.

\paragraph{Results and analysis.} Figure \ref{fig:ex2_mpc_beams} depicts 
the treatment plan output by MPC. Most beams are aimed slightly diagonal 
from the vertical, similar to the naive plan (Figure \ref{fig:ex1_beams}) 
up to session 14. Then, the bundles of beams start to grow sparser and fan 
out, hitting more areas of the OARs. This sparse irradiation pattern 
continues until the final session, when there is a brief spike in intensity 
to bring the target's health status into the desired range.

In Figure \ref{fig:ex2_mpc_traj_dose}, we plot the dose trajectories of 
the MPC plan (green) and the naive plan (blue). The MPC curves are more 
jagged with a large spike at the end of treatment. However, in each 
structure, the area under the MPC and naive dose curves remains on par. 
Thus, we conclude that the MPC plan delivers about the same amount of 
radiation as the naive plan, only spread across a wider range of beam 
angles/intensities so as to compensate for uncertainty in the health 
dynamics model.

This strategy results in better patient health as shown in Figure 
\ref{fig:ex2_mpc_traj_health}. The MPC plan reduces the target's health 
status to $0.05$, while maintaining the health status of the OARs at a 
high level. Indeed, the health of these organs under the MPC plan exceeds 
their health under the naive plan by a significant margin in all but 
structure 4, where the two are relatively equal up until the last session.

\begin{figure}
	\begin{center}
		\includegraphics[width=0.95\textwidth]{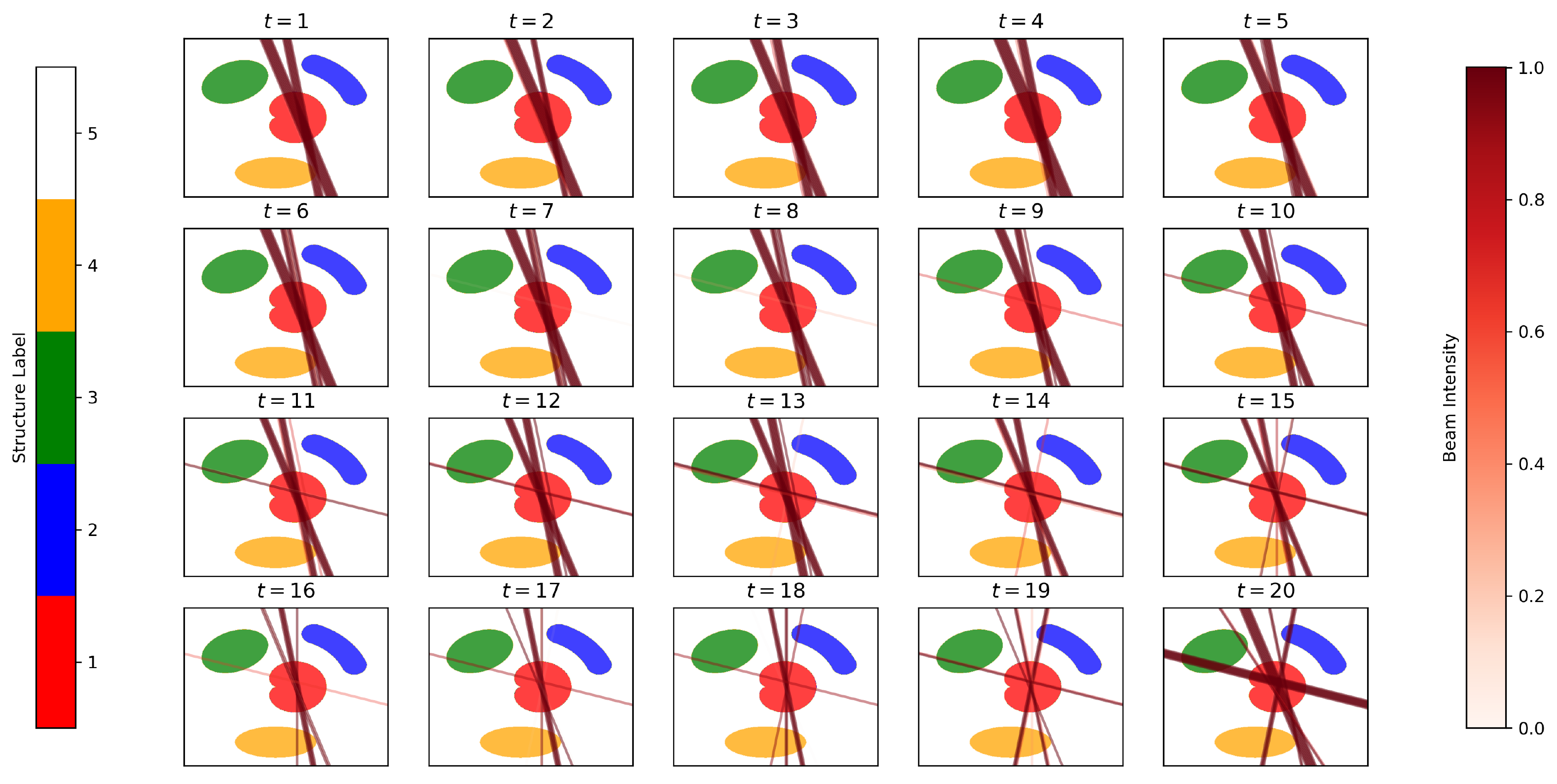}
	\end{center}
	\caption{Optimal beam intensities for Example \ref{sec:ex_simple_mpc} using MPC.}
	\label{fig:ex2_mpc_beams}
\end{figure}

\begin{figure}
	\begin{center}
		\begin{subfigure}[b]{0.95\textwidth}
			\caption{Dose Trajectories}
			\label{fig:ex2_mpc_traj_dose}
			\includegraphics[width=\textwidth]{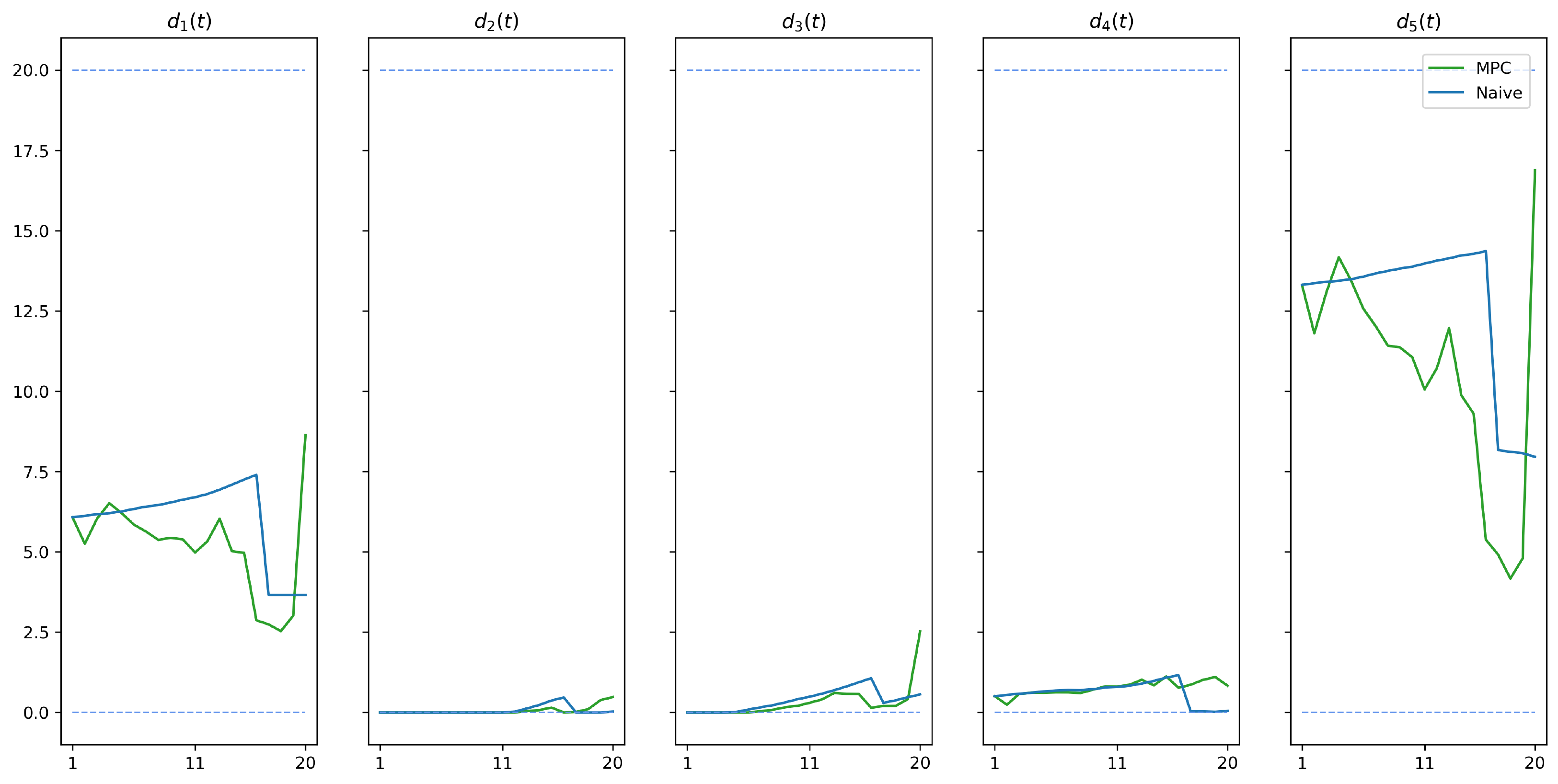}
		\end{subfigure}
		\par\bigskip
		\begin{subfigure}[b]{0.95\textwidth}
			\caption{Health Trajectories}
			\label{fig:ex2_mpc_traj_health}
			\includegraphics[width=\textwidth]{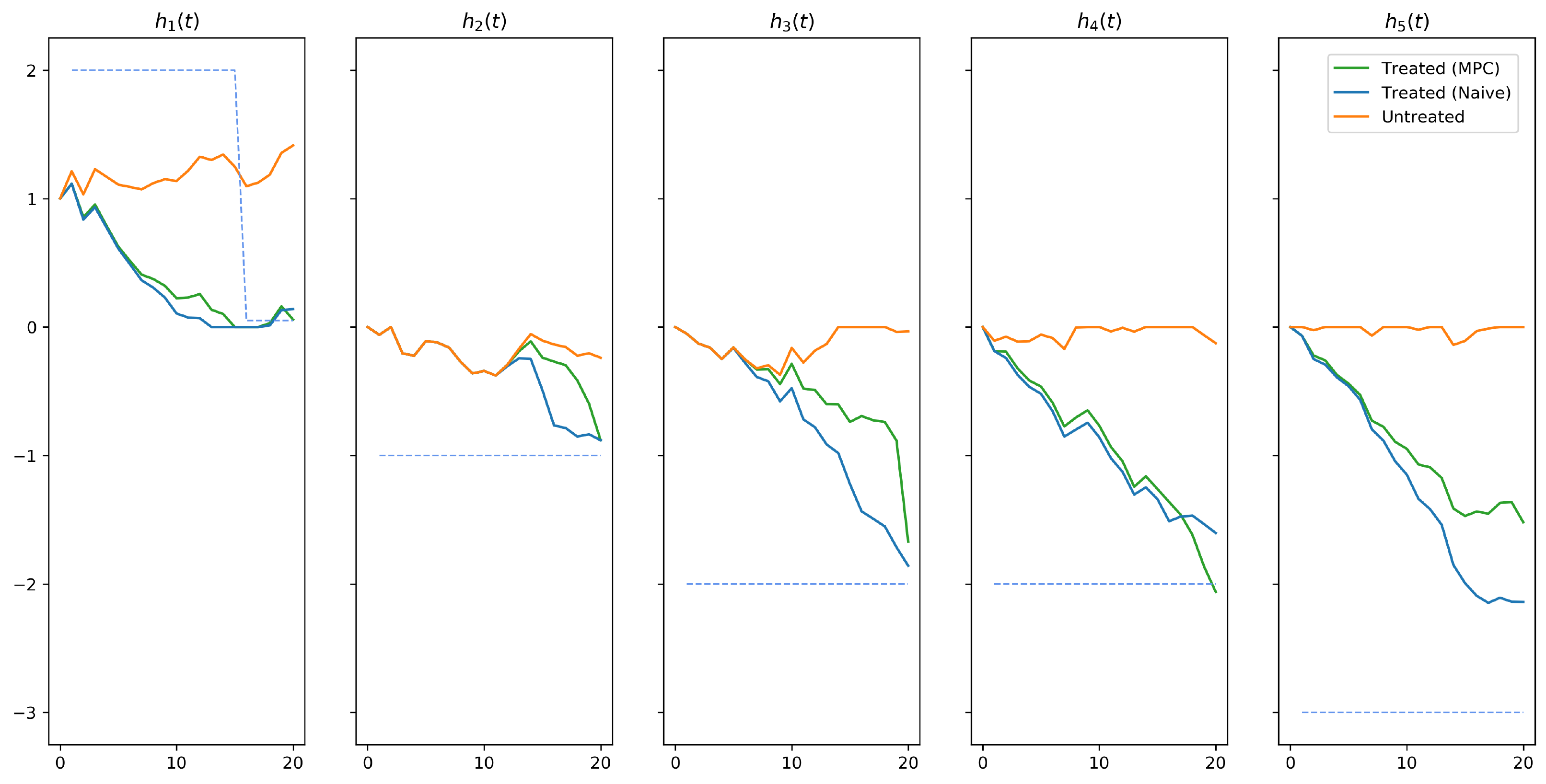}
		\end{subfigure}
		\caption{Optimal (a) radiation dose and (b) health status 
			trajectories for Example \ref{sec:ex_simple_mpc} using MPC (green) 
			and a naive planning approach (blue). Dashed lines represent the 
			bounds on the dose and health status ($D_{t,i}$ and $H_{t,i}$, 
			respectively). The MPC plan's health trajectories all remain within 
			the desired bounds, despite the error in the health dynamics model.}
		\label{fig:ex2_mpc_traj}
	\end{center}
\end{figure}

\clearpage

\section{Operator splitting}
\label{sec:op_split}
MPC enables us to robustly handle uncertainty over time. However, another 
challenge in radiation treatment planning is the sheer size of problems, 
which makes them computationally difficult to solve in practice. A typical 
case with $K = 15$ and $n = 10^4$ requires approximately $10^5$ 
floating-point operations for the beam-to-dose calculation alone. Over a 
month of sessions, that comes out to 4.5 million operations on a single 
machine.

In this section, we propose a fast, efficient method for solving the 
radiation treatment planning problem using operator splitting. Our method 
is distributed and scales readily with the number of beams as well as the 
length of treatment. It can be applied both to the original problem 
(\ref{prob:dyn_single}) and the soft constrained MPC variant 
(\ref{prob:dyn_single_mpc_slack}). Below, we describe the mathematical 
details for the former; the latter is a straightforward extension.

\subsection{Consensus form}
\label{sec:consensus}
We first rewrite problem (\ref{prob:dyn_single}) in an equivalent consensus 
form:
\BEQ
\label{prob:consensus}
\begin{array}{lll}
	\mbox{minimize} & \sum_{t=1}^T \phi_t(d_t) + 
		\sum_{t=1}^T \psi_t(h_t) & \\
	\mbox{subject to} & h_t = f_t(h_{t-1}, \tilde d_t), 
		\quad 0 \leq \tilde d_t \leq D_t, 
		\quad & t = 1,\ldots,T, \\
	& h_{t,i} \leq H_{t,i},~ i \in \mathcal{T}, 
		\quad h_{t,i} \geq H_{t,i},~ i \notin \mathcal{T}, 
		\quad & t = 1,\ldots,T, \\
	& d_t = A_tb_t, \quad 0 \leq d_t \leq D_t, 
		\quad 0 \leq b_t \leq B_t, \quad & t = 1,\ldots,T, \\
	& d_t = \tilde d_t, \quad \quad & t = 1,\ldots,T
\end{array}
\EEQ
with additional variable $\tilde d = (\tilde d_1,\ldots,\tilde d_T)$. 
This splits the problem into two parts, one that encapsulates the 
radiation physics and the other that contains the health dynamics. The 
parts share no variables. They are only linked by the consensus 
constraint, $d_t = \tilde d_t$, which requires their doses be equal.

\subsection{ADMM}
\label{sec:admm}
We solve problem (\ref{prob:consensus}) using an iterative algorithm 
called the alternating direction method of multipliers (ADMM) \cite{ADMM}.
In ADMM, the beams and health statuses are optimized separately, taking 
into account the difference between their resulting dose values. This 
difference is associated with a dual variable $u = (u_1,\ldots,u_T)$, 
where each $u_t \in \reals^K$, which is updated every iteration in order 
to promote consensus. 
\clearpage
\begin{algdesc}
	\label{algo:admm}
	\emph{ADMM algorithm.}
	\begin{tabbing}
		{\bf input}: initial point $(\tilde d^0, u^0)$, parameter $\rho > 0$.
			\\*[\smallskipamount]
		{\bf for} $k = 0,1,\ldots$ {\bf do} \\
		\qquad \= 1.\ \emph{Calculate beams.} For $t = 1,\ldots,T$, set 
			the value of $(b_t^{k+1}, d_t^{k+1})$ to a \\ 
		\> solution of the problem \\
		\> \qquad $\begin{array}{ll}
				\mbox{minimize} & \phi_t(d_t) + \frac{\rho}{2}\|d_t - 
					\tilde d_t^k - u_t^k\|_2^2 \\
				\mbox{subject to} & d_t = A_tb_t, \quad 0 \leq d_t \leq 
					D_t, \quad 0 \leq b_t \leq B_t.
			\end{array}$ \\
		\> 2.\ \emph{Calculate health trajectory.} Set the value of 
			$(h^{k+1}, \tilde d^{k+1})$ to a solution \\
		\> of the problem \\
		\> \qquad $\begin{array}{lll}
			\mbox{minimize} & \sum_{t=1}^T \psi_t(h_t) + \frac{\rho}{2}
				\|\tilde d - d^{k+1} + u^k\|_2^2 & \\
			\mbox{subject to} & h_t = f_t(h_{t-1}, \tilde d_t), 
				\quad 0 \leq \tilde d_t \leq D_t, 
				\quad & t = 1,\ldots,T, \\
			& h_{t,i} \leq H_{t,i},~ i \in \mathcal{T}, 
				\quad h_{t,i} \geq H_{t,i},~ i \notin \mathcal{T}, 
				\quad & t = 1,\ldots,T.
		\end{array}$ \\
		\> 3.\ \emph{Update dual variables.} $u^{k+1} := u^k + 
			\tilde d^{k+1} - d^{k+1}$. \\
		{\bf until} stopping criterion (\ref{cond:admm_stop}) is satisfied.
	\end{tabbing}
\end{algdesc}
Here $1/\rho> 0$ may be interpreted as the step size. Notice that the first 
step of Algorithm \ref{algo:admm} can be parallelized across sessions. We 
impose the dose bound constraint on both the beam and health subproblems 
because it produces faster convergence in practice.

\paragraph{Initialization.} For complex problems, the initial dose point 
$\tilde d^0$ can have a significant impact on the performance of Algorithm 
\ref{algo:admm}. Below, we describe one heuristic that produces a good 
starting point by solving a series of simple optimization problems. The 
idea is to first find the optimal beams in the static setting, where the 
entire treatment is delivered in a single session, then rescale these 
static beams each session to form a plan in the dynamic setting. By 
limiting ourselves to scaling factors, we significantly reduce the size of 
the dynamic problem.

We begin by solving the static treatment planning problem
\BEQ
\label{prob:init_static}
\begin{array}{ll}
	\mbox{minimize} & \phi_1(d_1) + \psi_1(h_1) + \mu \ones^T \zeta \\
	\mbox{subject to} & h_1 = f_1(h_0, d_1), \quad \zeta \geq 0, \\
	& h_{1,i} \leq H_{T,i},~ i \in \mathcal{T}, 
		\quad h_{1,i} \geq H_{T,i} - \zeta_i,~ i \notin \mathcal{T}, \\
	& d_1 = A_1b_1, \quad 0 \leq d_1 \leq \sum_{t=1}^T D_t, 
		\quad 0 \leq b_1 \leq \sum_{t=1}^T B_t
\end{array}
\EEQ
with respect to $b_1 \in \reals^n, d_1 \in \reals^K, h_1 \in \reals^K$, and 
$\zeta \in \reals^K$, where $\mu > 0$ is a slack penalty parameter. A 
reasonable choice for $\mu = \frac{1}{K - |\mathcal{T}|}$, assuming there 
is at least one non-target structure. Problem (\ref{prob:init_static}) is 
convex and can be easily handled on a single machine (\eg, via 
interior-point methods) for up to $10^5$ beams. Denote the optimal beam 
intensities by $b^{\text{stat}}$.

Next, we consider the dynamic treatment planning problem in which the 
beams for each session are restricted to be a scalar multiple of 
$b^{\text{stat}}$,
\BEQ
\label{prob:init_scale}
\begin{array}{lll}
	\mbox{minimize} & \sum_{t=1}^T \phi_t(d_t) + \sum_{t=1}^T \psi_t(h_t) 
		+ \mu \sum_{t=1}^T \ones^T\zeta_t & \\
	\mbox{subject to} & h_t = f_t(h_{t-1}, d_t), \quad \zeta_t \geq 0, 
		\quad & t = 1,\ldots,T, \\
	& h_{t,i} \leq H_{t,i},~ i \in \mathcal{T}, 
		\quad h_{t,i} \geq H_{t,i} - \zeta_{t,i},~ i \notin \mathcal{T}, 
		\quad & t = 1,\ldots,T, \\
	& d_t = \nu_tb^{\text{stat}}, \quad 0 \leq d_t \leq D_t, 
		\quad \nu_t \geq 0, \quad & t = 1,\ldots,T
\end{array}
\EEQ
with variables $(\nu_1,\ldots,\nu_T), (d_1,\ldots,d_T), (h_1,\ldots,h_T)$, 
and $(\zeta_1,\ldots,\zeta_T)$, where each $\nu_t \in \reals$ and $\zeta_t 
\in \reals^K$. This problem can be solved using a slight variation on 
Algorithm \ref{sec:ccp}. (For the initial CCP point, we may use the optimal 
time-invariant $\nu_t = \nu$ when $\beta_t = 0$; finding this value entails 
solving a small convex problem). Since there are only $O(TK)$ variables, 
convergence is generally quick, taking less than 5 iterations in our 
experiments. We use the resulting doses as our initial dose point for ADMM, 
\ie, $\tilde d_t^0 = \nu_t^{\star}b^{\text{stat}}$ for $t = 1,\ldots,T$.

Besides providing a good starting point, this initialization heuristic also 
gives us a way to quickly tune problem parameters. If the health trajectory 
from $\tilde d^0$ is poor, it is much faster to modify weights and re-solve 
problems (\ref{prob:init_static}) and (\ref{prob:init_scale}) than it is to 
re-run the full ADMM algorithm.

\paragraph{Stopping criterion.} If problem (\ref{prob:consensus}) is convex, 
then under mild conditions, ADMM converges to a solution assuming one exists. 
Moreover, the primal and dual residuals
\begin{align}
\label{eq:opt_residual}
	r_{\text{prim}}^k &= d^k - \tilde d^k \\
	r_{\text{dual}}^k &= \rho(\tilde d^k - \tilde d^{k-1})
\end{align}
also converge to zero. Thus, a reasonable stopping criterion is
\BEQ
\label{cond:admm_stop}
	\|r_{\text{prim}}^k\|_2 \leq \epsilon_{\text{prim}} \quad \mbox{and} 
		\quad \|r_{\text{dual}}^k\|_2 \leq \epsilon_{\text{dual}},
\EEQ
where $\epsilon_{\text{prim}} > 0$ and $\epsilon_{\text{dual}} > 0$ are 
tolerances for primal and dual feasibility, respectively. Typically, these 
tolerances are chosen with respect to absolute and relative cutoffs 
$\epsilon_{\text{abs}} > 0$ and $\epsilon_{\text{rel}} > 0$ using the 
relation
\begin{align*}
	\epsilon_{\text{prim}} &= \epsilon_{\text{abs}}\sqrt{TK} + 
		\epsilon_{\text{rel}}\max(\|d^k\|_2, \|\tilde d^k\|_2) \\
	\epsilon_{\text{dual}} &= \epsilon_{\text{abs}}\sqrt{TK} + 
		\epsilon_{\text{rel}}\|u^k\|_2.
\end{align*}
A common choice for $\epsilon_{\text{rel}} = 10^{-3}$, while the choice 
for $\epsilon_{\text{abs}}$ depends on the scale of the treatment planning 
problem \cite[\S 3.3.1]{ADMM}.

\paragraph{Convergence and choice of $\rho$.} When the problem is convex, 
\ie, the health dynamics function is affine, Algorithm \ref{algo:admm} 
converges to a solution for any $\rho > 0$, although the value of $\rho$ 
may have an impact on the practical convergence rate. When the problem is 
nonconvex, ADMM is a heuristic and the final beam/dose plan can depend 
directly on $\rho$ \cite[\S 9]{ADMM}. The question of how to choose $\rho$ 
is still unsettled; see \cite{GhadimiJohansson:2015,XuGoldstein:2017,XuLiuYang:2017} 
for further discussion on the topic. We have found that for data on the 
order of one, values of $\rho$ between $10^{-2}$ and $10^2$ work reasonably 
well.

\subsection{Clinical example}
\label{sec:ex_clinical}

\paragraph{Problem instance.} We test our method on a fluence map 
optimization of a prostate cancer IMRT case with $n = 34848$ beams and 
$K = 7$ structures consisting of a single PTV ($i = 1$), five OARs, and 
generic body voxels ($i = 7$). Treatment is carried out over $T = 45$ 
sessions, so the planning problem has about 1.6 million variables. The 
matrix $A_t$ remains constant over time and maps the beam intensities to 
the {\em average} dose per structure, \ie, $d_{t,i}$ is the total dose to 
structure $i$ divided by the number of voxels in $i$. Each beam's intensity 
cannot exceed $B_t = 0.025$.  

The LQ model parameters, initial health status, and dose and health status 
bounds can be found in Table \ref{table:prostate_parms}; these have been 
adapted from prior clinical datasets \cite{Kehwar:2005,GaoMayr:2010,MarksYorke:2010,LeeuwenOei:2018}. 
We choose the health and dose penalty functions to be 
\[
\psi_t(h_t) = (h_{t,1})_+ + \frac{1}{6}\sum_{i=2}^7 (h_{t,i})_-, \quad 
\phi_t(d_t) = \sum_{i=1}^6 d_{t,i}^2 + 0.25d_{t,7}^2, \quad t = 1,\ldots,T.
\]
These penalties place greater importance on reducing the health status of 
the PTV compared to sparing the OARs or generic body tissue. 

\begin{table}
	\caption{Prostate IMRT Problem Parameters}
	\label{table:prostate_parms}
	\centering
	\begin{adjustbox}{max width=\textwidth}
	\begin{tabular}{llcccccc} \toprule
	& & \multicolumn{3}{l}{LQ model} & \multicolumn{3}{l}{Health and dose} \\
	\cmidrule{3-8}
	$i$ & Structure & $\alpha_{t,i}$ & $\beta_{t,i}$ & $\gamma_{t,i}$ & $h_{0,i}$ & $H_{t,i}$ & $D_{t,i}$ \\
	\midrule
	1 & Prostate & 0.15 & 0.05 & $\begin{cases} 0 & t \leq 28 \\ 
		0.0173 & t > 28 \end{cases}$ & 5.8579 & $\begin{cases} 5.8579 & t \leq 14 \\ 
		4.4716 & 15 \leq t \leq 31 \\
		0 & t > 31 \end{cases}$ & 10 \\
	2 & Urethra  & 1    & 0.2  & 0 & 0 & -4.8 & 10 \\ 
	3 & Bladder  & 1    & 0.2  & 0 & 0 & -4.8 & 10 \\
	4 & Rectum   & 1    & 0.2  & 0 & 0 & -4.8 & 10 \\
	5 & L. Femoral Head & 1 & 0.25 & 0 & 0 & -3.0 & 10 \\
	6 & R. Femoral Head & 1 & 0.25 & 0 & 0 & -3.0 & 10 \\
	7 & Body     & 1    & 0.3333 & 0 & 0 & -6.0 & 10 \\
	\bottomrule
	\end{tabular}
	\end{adjustbox}
\end{table}

\paragraph{Computational details.} The computational setup is the same as 
in Example \ref{sec:ex_simple}. To solve the ADMM subproblems, we used 
MOSEK and ran CCP ($\lambda = 10^4$) on the health trajectory subproblem. 
With $\rho = 80$, ADMM converged in $82$ iterations to cutoffs of
$\epsilon_{\text{abs}} = 10^{-2}$ and $\epsilon_{\text{rel}} = 10^{-3}$. 
The normed residuals, $\|r_{\text{prim}}^k\|_2$ and 
$\|r_{\text{dual}}^k\|_2$, are shown in Figure \ref{fig:ex3_residuals}. 
Total runtime was about 43 minutes, with the bulk of that time spent on the 
main ADMM loop (initialization took only 32 seconds). By contrast, a 
straightforward application of Algorithm \ref{algo:seq_cvx} to this problem 
required over an hour.

\paragraph{Results and analysis.} Figure \ref{fig:ex3_traj_dose} depicts 
the dose trajectories resulting from the initial plan (green) and the final 
plan output by ADMM (blue). The initial plan is essentially a piecewise 
equal-dose fractionation scheme, reflected by the flat plateaus in the 
corresponding dose trajectories. This already gives us a good approximation 
of the final plan: both plans maintain a relatively high dose to the PTV 
of about 0.9 Gy until session 31, then drop off sharply to the same constant 
doses thereafter. However, during the high dose phase, the final plan 
gradually increases the dosage over time to all structures except the 
bladder ($i = 3$). By adapting dynamically to changes in the patient's 
anatomy, it is able to deliver more dose per session and thus achieve better 
tumor control, while still respecting the limits on the OARs' health statuses.

Indeed, we see in Figure \ref{fig:ex3_traj_health} that the final plan 
exactly attains the desired PTV health status of zero for $t > 31$. It must 
sacrifice some OARs to do this, reducing the health statuses of the urethra, 
rectum, and right femoral head ($i = 2,4$, and $6$) to their lower bounds, 
but never violates those bounds. In fact, by shifting radiation to other 
structures, the final plan actually improves the health of the bladder over 
that from the initial plan, which results in a $h_3(t)$ far below the limit 
of $-4.8$ for $t \geq 35$. Overall, it is clear that the combination of a 
solid initialization heuristic and ADMM produces a treatment plan that 
satisfies or even exceeds all of our clinical goals.

\begin{figure}
	\begin{center}
		\includegraphics[width=0.95\textwidth]{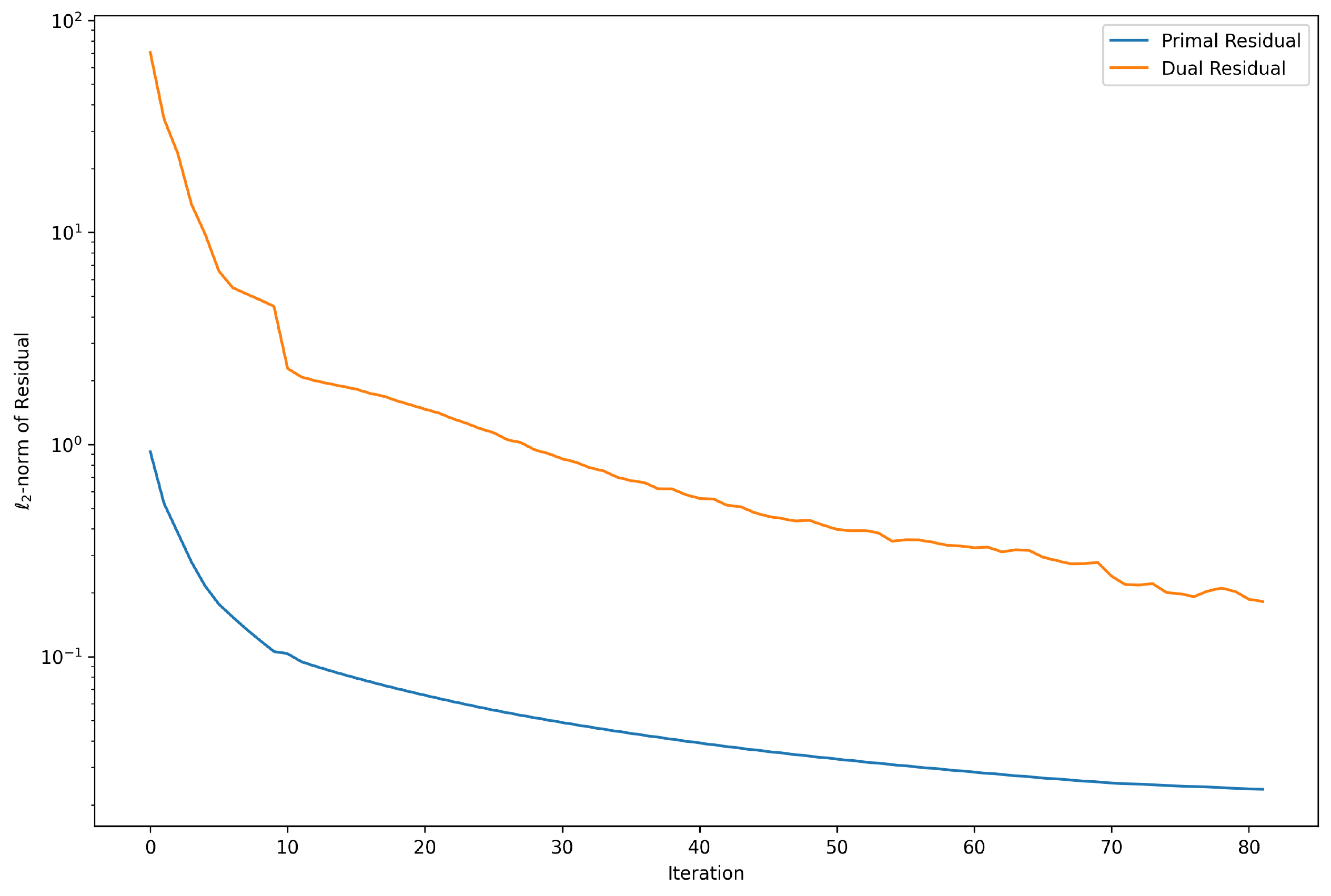}
	\end{center}
	\caption{Primal and dual residual $\ell_2$-norms for Example \ref{sec:ex_clinical}.}
	\label{fig:ex3_residuals}
\end{figure}

\begin{figure}
	\begin{center}
		\includegraphics[width=0.95\textwidth]{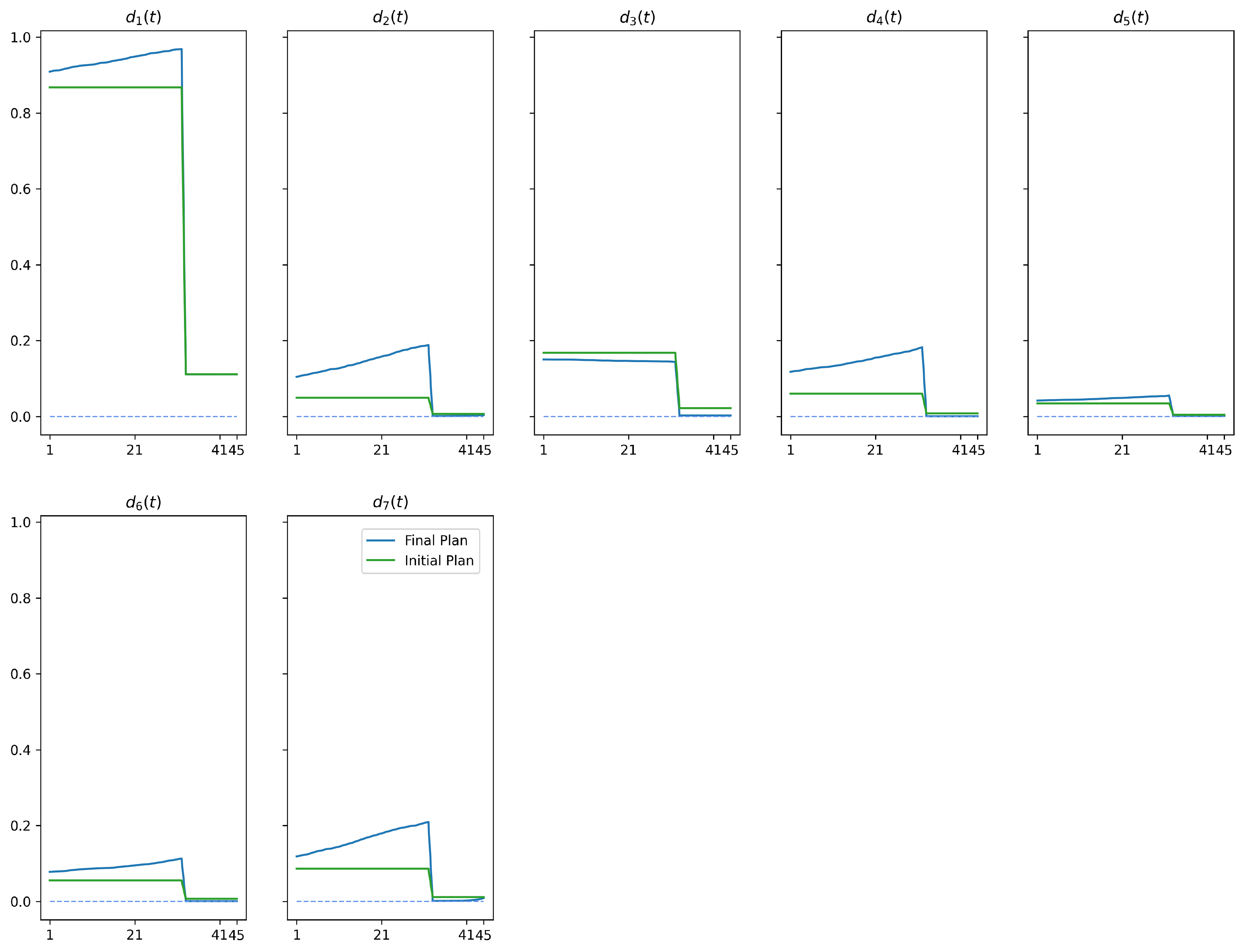}
	\end{center}
	\caption{Optimal radiation dose trajectory for Example \ref{sec:ex_clinical}. 
		The initial plan (green) depicts the dose output by the initialization 
		heuristic described in \S \ref{sec:admm}, while the final plan (blue) 
		depicts the dose output by the ADMM algorithm.}
	\label{fig:ex3_traj_dose}
\end{figure}

\begin{figure}
	\begin{center}
		\includegraphics[width=0.95\textwidth]{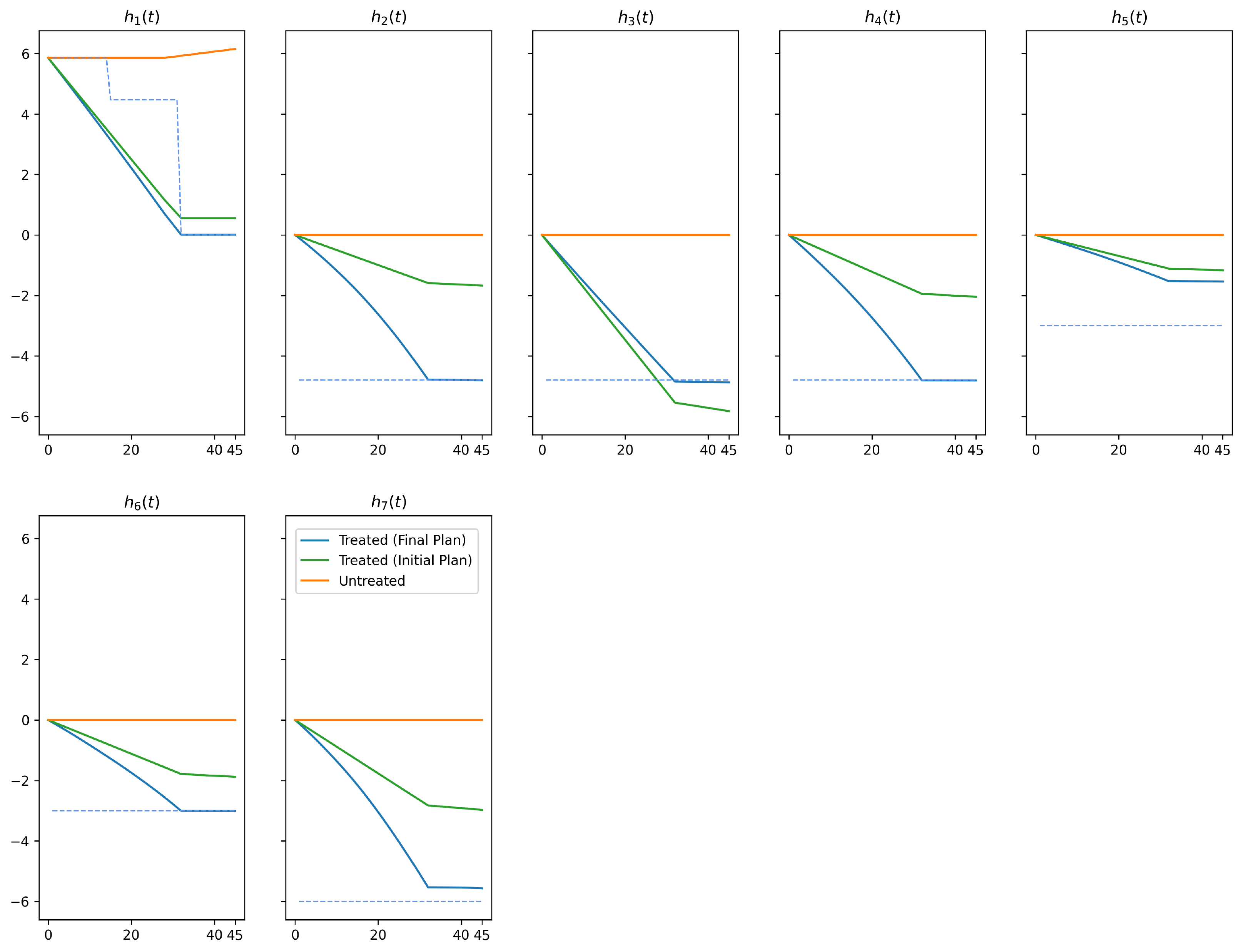}
	\end{center}
	\caption{Optimal health trajectories resulting from the doses in Figure 
		\ref{fig:ex3_traj_dose}. Dashed lines represent the health status bounds 
		$H_{t,i}$.}
	\label{fig:ex3_traj_health}
\end{figure}

\section{Implementation}
\label{sec:software}

We provide an implementation of our adaptive radiation treatment planning 
method in \adarad{}, an open-source Python software package based on CVXPY 
\cite{CVXPY}. Our implementation is fully distributed, leveraging Python's 
built-in multiprocessing library to execute solves in parallel. Users can 
quickly import patient data, define clinical goals, construct treatment 
plans, and visualize the results. They can also rapidly modify and re-plan 
a case, allowing for comparisons between different prescriptions and 
treatment lengths. Moreover, since \adarad{} is a Python library, it can 
be easily integrated with other libraries (\eg, for image processing) used 
in radiation therapy.

The code below imports some patient data and a prescription, solves for 
the optimal treatment plan, and plots the resulting dose and health 
trajectories.

\begin{verbatim}
import adarad, numpy
from adarad import Case, CasePlotter

# Construct the clinical case.
> case = Case()
> case.import_file("/examples/patient_01-case.yaml")
> case.physics.dose_matrix = numpy.load("/examples/patient_01-dmat.npy")

# Solve using ADMM algorithm.
> status, result = case.plan(slack_weight = 50, max_iter = 100, 
                             solver = ECOS, use_admm = True)
> print("Solve status: {}".format(status))
> print("Solve time: {}".format(result.solver_stats.solve_time))
> print("Iterations: {}".format(result.solver_stats.num_iters))

# Plot the dose and health trajectories.
> caseviz = CasePlotter(case)
> caseviz.plot_treatment(result, stepsize = 10)
> caseviz.plot_health(result, stepsize = 10)
\end{verbatim}
In this example, the dose matrix $A_t$ is the same for all $t$ and stored 
in a single \texttt{*.npy} file. \adarad{} also supports other sparse data 
representations, such as \texttt{scipy.csc\_matrix}. To specify a 
time-varying dose matrix, the user would input a list of matrices in order 
$[A_1, \ldots, A_T]$.

We start by constructing a \texttt{Case}, which contains \texttt{Anatomy}, 
\texttt{Physics}, and \texttt{Prescription} objects. The \texttt{Anatomy} 
and \texttt{Physics} must be defined prior to planning, either by manually 
specifying them in the code or importing a case description. A description 
is a YAML file that contains at minimum the keys \texttt{treatment\_length} 
and \texttt{structures}, where the latter is a list of anatomical structures 
$i = 1,\ldots,K$, each of which has a \texttt{name}, \texttt{is\_target} 
boolean indicator, and \texttt{alpha}, \texttt{beta}, and \texttt{gamma} 
values corresponding to the LQ model parameters. The initial health status 
and health and dose bounds may also be specified.

Once the \texttt{Case} is defined, we can solve for the optimal treatment 
plan. The \texttt{plan} function implements Algorithms \ref{algo:seq_cvx} 
and \ref{algo:admm} (the latter with \texttt{use\_admm = True}). It takes 
as optional input \texttt{d\_init}: the initial dose point, \texttt{use\_slack}: 
a boolean indicating whether to include slack variable $\delta$, 
\texttt{slack\_weight}: the slack penalty parameter $\lambda$, 
\texttt{max\_iter}: the maximum number of iterations, and \texttt{solver}: 
the convex solver to use for the beam and health subproblems. In the above 
example, we call the solver ECOS \cite{ECOS}, one of several free, 
open-source solvers packaged with CVXPY. If MOSEK is installed, we can call 
it as well by passing \texttt{solver = MOSEK} into the planning function.

After the algorithm finishes, \texttt{plan} saves the results in 
\texttt{case.current\_plan} and returns the final solve status along with 
a \texttt{RunRecord} object that carries solver performance data, such as 
the total runtime, and the optimal variable values. To visualize the 
resulting plan, we instantiate a \texttt{CasePlotter} object and call 
\texttt{plot\_treatment} and \texttt{plot\_health} on the \texttt{RunRecord} 
to display the dose and health trajectories, respectively. We can also 
extract the optimal beams, doses, and health statuses with, \eg, 
\texttt{result.beams} for further processing.

If we wish to explore alternate plans, we can easily modify the dose and 
health status constraints of any structure and re-plan the case. Re-planning 
is generally fast, since \adarad{} uses the previously stored solution as a 
warm start point. In a typical workflow, we may import a prescription formed 
from general clinical guidelines, then repeatedly adjust the dose/health 
status bounds until we obtain a treatment plan with our desired properties. 
The \texttt{case.current\_plan} will be updated with the new optimal values 
after each run. To keep a history of plans for comparison, we can save our 
results in the \texttt{Case} by calling \texttt{save\_plan} before 
re-optimizing. The code below provides an example of changing the upper 
dose bound on the PTV to $D_{t,i} = 10$ Gy for all sessions and plotting the 
dose and health trajectories under this new constraint alongside the 
trajectories of the original plan.
\begin{verbatim}
# Save previous treatment plan.
> case.save_plan("Original Plan")

# Constraint allows maximum of 10 Gy per session on the PTV.
> case.prescription["PTV"].dose_upper = 10

# Re-plan the case with new dose constraint.
> status2, result2 = case.plan(slack_weight = 50, max_iter = 100, 
                               solver = ECOS, use_admm = True)
> print("Solve status: {}".format(status2))

# Compare original and new treatment plans.
> caseviz.plot_treatment(result2, stepsize = 10, label = "New Plan", 
                         plot_saved = True)
> caseviz.plot_health(result2, stepsize = 10, label = "New Plan", 
                      plot_saved = True)
\end{verbatim} 
For more details on \adarad{}'s functions as well as additional examples, 
see the documentation at \url{https://github.com/anqif/adarad}.

\section{Conclusion}
\label{sec:conclusion}
To achieve the best outcomes, radiation therapy must adapt to new 
information about the patient's health and anatomy during treatment. 
We have described one method for adaptive radiation treatment planning 
using an operator splitting algorithm. Our method is highly scalable, 
parallelizable, and can efficiently handle a large number of beams and 
sessions. Moreover, it is robust to errors in the patient's health response 
model, as well as other sources of uncertainty in the clinic. We 
demonstrated its effectiveness on a large prostate cancer case and showed 
that the resulting plan improves markedly on a standard equal-dose 
fractionation scheme. 

Future work will focus on expanding our health response model to include 
sublethal damage repair, redistribution, and reoxygenation effects. We will 
also incorporate dose-volume constraints into the optimal control problem. 
Finally, to increase our algorithm's speed, we intend to release an 
implementation that takes advantage of the parallel processing capabilities 
of the GPU.

\section*{Acknowledgments}
We thank Peng Dong for providing the anonymized dataset for the prostate 
cancer IMRT case. This research was supported by the Stanford Graduate 
Fellowship.

\clearpage
\bibliographystyle{alpha}
\bibliography{adapt_rad_therapy}
\nocite{BoydVandenberghe:2004}

\end{document}